\newcommand{\cP}{\mathcal P}
\newcommand{\cS}{\mathcal S}
\newcommand{\cF}{\mathcal F}
\newcommand{\cR}{\mathcal R}
\newcommand{\cA}{\mathcal A}
\newcommand{\cH}{\mathcal H}
\newcommand{\cK}{\mathcal K}
\newcommand{\cL}{\mathcal L}
\newcommand{\Id}{\mathrm{Id}}
\newcommand{\bN}{\mathbb N}
\newcommand{\bM}{\mathbb M}
\newcommand{\ConfGraph}{\mathbb G}
\newcommand{\Conf}{\mathrm{Conf}}
\newcommand{\state}{\mathrm{state}}
\newcommand{\ol}[1]{\overline{#1}}
\newcommand{\pair}[2]{
  \begin{pmatrix}
    #1\\#2
  \end{pmatrix}
}
\DeclareMathOperator{\twins}{\mathrm{twns}}
\DeclareMathOperator{\alphabet}{\mathrm{alph}}
\DeclareMathOperator{\reach}{\mathrm{Reach}}
\DeclareMathOperator{\step}{\mathrm{Step}}
\DeclareMathOperator{\fnf}{\mathrm{fnf}}
\DeclareMathOperator{\eFNF}{\mathrm{eFNF}}
\begin{document}

\title{The theory of reachability of trace-pushdown
  systems\thanks{This is an extended version of the author's
    conference publication \cite{Kus25} and contains full proofs.}}

%\address{Address for correspondence goes here}

\author{Dietrich Kuske\\
Technische Universi\"at Ilmenau\\
dietrich.kuske{@}tu-ilmenau.de} 

\maketitle

\runninghead{D. Kuske}{The theory of reachability of trace-pushdown
  systems}

\begin{abstract}
  We consider pushdown systems that store, instead of a single word, a
  Mazurkiewicz trace on its stack. These systems are special cases of
  valence automata over graph monoids and subsume multi-stack
  systems. We identify a class of such systems whose configuration
  graph with reachability has a decidable first-order theory.

  This result complements results by D'Osualdo, Meyer, and Zetzsche
  (namely the decidability for arbitrary pushdown systems under a
  severe restriction on the dependence alphabet).
\end{abstract}

\begin{keywords}
  pushdown system, Mazurkiewicz trace, reachability
\end{keywords}

\section{Introduction}

Pushdown systems are a well-studied type of infinite state system with
many variants and generalizations.  Valence automata
\cite{IbaSK76,Gil94,ItoMVM01,MitS01,ElsO04,EldKO08,Kam09,REnK09,Ren10,Zet15,Zet21,OsuMZ16}
can be used to study many of these variants in a uniform setting: in a
pushdown system, the possible pushdown contents come from a free
monoid and can be accessed at the suffix, only. Differently, in a
valence automaton, the free monoid is replaced by an arbitrary
monoid. As an example, graph monoids allow us to model pushdown
systems, Petri nets, multi-stack automata, counter automata, and many
more.

For a pushdown system, the configuration graph with reachability can
be interpreted in the full tree. Hence Rabin's tree theorem ensures
that monadic second order properties of this graph are decidable. This
decidability gets lost very soon once one considers valence automata
over graph monoids. Such a valence automaton consists of a finite
control and the graph monoid as generalization of the pushdown.

Therefore, D'Osualdo, Meyer and Zetzsche \cite{OsuMZ16} considered
first-order logic. They were able to characterize those graph
monoids  that ensure the decidability, no matter what local control
the monoid is attached to.

Differently here, we allow all loop-free graph monoids, but aim at
properties of the finite control that ensure the decidability
of the first-order theory of the configuration graph with
reachability. These loop-free graph monoids are precisely
Mazurkiewicz' trace monoids.

In \cite{KoeK23,KoeK25a,KoeK26}, we considered a related problem:
there, we were concerned with the question what properties of the
finite control ensure the decidability of the reachability
relation. It turned out that this is the case for what we called
cooperating multi-pushdown systems in \cite{KoeK23,KoeK25a} and, more
generally, trace-pushdown systems in \cite{KoeK26}. To prove this
decidability, we showed that, in this case, the reachability relation
(a binary relation on the trace monoid) is ``lc-rational''
\cite{Kus23}, a non-trivial restriction of rational trace
relations. In general terms, this proof proceeds as follows. One first
handles two very special cases of systems where, either, the pushdown
is shortened in every step (i.e., no transition writes anything to the
pushdown) or, alternatively, the pushdown is never shortened, but all
transitions read ``similar'' letters (in some precise meaning). The
central argument in these two cases is that the class of lc-rational
relations is closed under the componentwise concatenation with direct
products of a recognizable and a rational trace language (in general,
the concatenation of lc-rational relations need not be
lc-rational). To handle the general case, one shows that the
reachability relation can be obtained from the special cases by union
and composition. Here, one uses that the class of lc-rational
relations is closed under these two operations (in general, the
composition of rational trace relations is not necessarily rational).

A first observation in this paper is that the said first-order theory
is in general undecidable for trace-pushdown systems, the
undecidability even holds for existential formulas
(Example~\ref{E-undecidable}). To overcome this undecidability, we
restrict attention to loop-connected trace-pushdown systems. The
configuration graph of such systems is proved to be an automatic
structure \cite{Hod82,KhoN95,BluG00} implying the desired
decidability.

The central argument in this proof is therefore the automaticity of
the configuration graph of a loop-connected trace-pushdown system. To
this aim, we use the Foata normal form of a trace as its word
representation. Consequently, one has to verify that the reachability
relation between such Foata normal forms can be accepted by a
synchronous two-tape automaton, i.e., that the reachability relation
is ``fnf-automatic''. This proof follows the line of argument of the
lc-rationality for arbitrary trace-pushdown systems, but extends it at
some central point. So one first has to handle the two special cases
mentioned above. In both these cases, it is used that the class of
fnf-automatic relations is closed under concatenation with direct products
of two recognizable trace languages (Lemma~\ref{L-fnf-sr}). Then the
result for deleting systems follows immediately. For non-deleting
systems, the new requirement to be loop-connected allows to show that
the set of traces that can be written onto the pushdown is
recognizable (without the requirement, it was known to be rational,
only).  Now the general case follows as before since the class of
fnf-automatic relations is closed under composition and union.

\section{Preliminaries}

\subsection{Mazurkiewicz traces}

Zero belongs to the set $\bN$ of non-negative integers and, for
$n\in\bN$, we set $[n]=\{1,2,\dots,n\}$.

Let $\Sigma$ be some alphabet and $w\in \Sigma^*$ a word over
$\Sigma$. The \emph{alphabet of $w$}, denoted
$\alphabet(w)\subseteq \Sigma$, is the set of letters occurring in the
word $w$. The number of occurrences of the letter $a$ in the word $w$
is denoted $|w|_a$.

A \emph{dependence alphabet} is a pair $(\Sigma,D)$ where $\Sigma$ is
a finite set of \emph{letters} and $D\subseteq \Sigma\times \Sigma$ is
a reflexive and symmetric relation, the \emph{dependence
  relation}. For a letter $a\in \Sigma$, we write $D(a)$ for the set
$\{b\in \Sigma \mid (a,b)\in D\}$ of letters dependent on $a$,
$D(B)=\bigcup_{b\in B}D(b)$ is the set of letters dependent on some
letter in $B\subseteq \Sigma$; $D(w)=D(\alphabet(w))$ for any word
$w\in\Sigma^*$. A set of letters $A\subseteq\Sigma$ is
\emph{connected} if the graph $(A,D\cap A^2)$ is connected.

Two letters $a$ and $b$ are \emph{twins} if $D(a)=D(b)$. The set of
twins of a letter $a$ is denoted
$\twins(a)=\{b\in\Sigma\mid D(a)=D(b)\}$. The \emph{twin index} of
$(\Sigma,D)$ is the number of sets $\twins(a)$ with $a\in\Sigma$.

The \emph{independence relation} $I\subseteq \Sigma\times \Sigma$ is
the set of pairs $(a,b)$ of distinct letters with $(a,b)\notin D$. If
$A$ and $B$ are two sets of letters with $A\times B\subseteq I$ (i.e.,
any letter from $A$ is independent from any letter from $B$), then we
write $A\parallel B$. For words $u,v\in\Sigma^*$, $u\parallel v$
abbreviates $\alphabet(u)\parallel\alphabet(v)$.

Let ${\sim}\subseteq \Sigma^*\times \Sigma^*$ denote the least monoid
congruence with $ab\sim ba$ for all $(a,b)\in I$. In other words,
$u\sim v$ holds for two words $u,v\in \Sigma^*$ iff $u$ can be
obtained from $v$ by transposing consecutive independent letters. In
particular, $u\sim v$ implies $|u|=|v|$, $|u|_a=|v|_a$ for all
$a\in\Sigma$, and $\alphabet(u)=\alphabet(v)$. Furthermore,
$u\parallel v$ implies $uv\sim vu$ (but the converse implication does
not hold as the example $u=v=ab$ shows).

The \emph{(Mazurkiewicz) trace monoid} induced by $(\Sigma,D)$ is the
quotient of the free monoid $\Sigma^*$ wrt.\ the congruence $\sim$,
i.e., $\bM=\Sigma^*/{\sim}$. Its elements are equivalence classes of
words denoted $[w]$; by $[w]$, we mean the equivalence class
containing $w$, it is the \emph{trace induced by $w$}. Note that $\bM$
is indeed a monoid whose unit element is the trace $[\varepsilon]$;
this trace is denoted $1$.

Suppose $D=\Sigma\times \Sigma$, i.e., all letters are mutually dependent. Then
$u\sim v\iff u=v$ holds for all words $u,v\in \Sigma^*$; hence
$\bM\cong \Sigma^*$ in this case.

The other extreme is $D=\{(a,a)\mid a\in \Sigma\}$ where any two
distinct letters are independent. Then $u\sim v$ iff $|u|_a=|v|_a$
holds for any letter $a\in \Sigma$. Hence $\bM\cong(\bN^{|\Sigma|},+)$
in this case.

To consider a case in between, let $\Sigma$ be the disjoint union of
$A$ and $B$ and suppose $D=A^2\cup B^2$. Then $\bM$ is isomorphic to
the direct product of the free monoids $A^*$ and $B^*$.

\paragraph{Automata and Word Languages} An \emph{$\varepsilon$-NFA}
or \emph{nondeterministic finite automaton with
  $\varepsilon$-transitions} is a tuple $\cA=(Q,\Sigma,I,T,F)$ where
$Q$ is a finite set of \emph{states}, $\Sigma$ is an alphabet,
$I,F\subseteq Q$ are the sets of \emph{initial} and \emph{final}
states, respectively, and
\[
  T\subseteq Q\times (\Sigma\cup\{\varepsilon\}) \times Q
\]
is the set of \emph{transitions}.  The $\varepsilon$-NFA $\cA$ is an
\emph{NFA} if $T\subseteq Q\times \Sigma\times Q$.

Let $\cA=(Q,\Sigma,I,T,F)$ be an $\varepsilon$-NFA. A \emph{path} is a
sequence
\[
  (p_0,a_1,p_1)(p_1,a_2,p_2)\cdots (p_{n-1},a_n,p_n)
\]
of matching transitions (i.e., elements of $T$). Such a path is usually denoted
\[
  p_0 \xrightarrow{a_1} p_1
  \xrightarrow{a_2} p_2
  \cdots
  \xrightarrow{a_n} p_n
\]
or, if the intermediate states are of no importance,
\[
  p_0 \xrightarrow{a_1a_2\cdots a_n} p_n\,.
\]
This path is \emph{accepting} if it connects an initial state with a
final state, i.e., if $p_0\in I$ and $p_n\in F$. It accepts the word
$w=a_1a_2\cdots a_n$ (note that $a_i\in \Sigma\cup\{\varepsilon\}$
such that $|w|<n$ is possible). We denote by $L(\cA)$ the set of words
accepted by $\cA$. A language $L\subseteq \Sigma^*$ is \emph{regular}
if it is accepted by some NFA, i.e., if there is some NFA $\cA$ with
$L=L(\cA)$.

A foundational result in the theory of finite automata states that
$\varepsilon$-NFA and NFA are equally expressive. Even more,
$\varepsilon$-NFA can be transformed into equivalent NFA in polynomial
time.

A language $L\subseteq \Sigma^*$ is \emph{rational} if it can be
constructed from finite languages using the operations union,
multiplication, and Kleene star. By Kleene's theorem \cite{Kle56}, a
language is regular if, and only if, it is rational.

\paragraph{Rational and Recognizable Trace Languages}

Fix some dependence alphabet $(\Sigma,D)$.  For a word language
$L\subseteq \Sigma^*$, we denote by $[L]$ the set of traces $[u]$ induced
by words from $L$, i.e., $[L]=\{[u]\mid u\in L\}\subseteq\bM$.

Now let $\cL\subseteq\bM$. The set $\cL$ is \emph{recognizable} if the
word language $\{u\in \Sigma^*\mid [u]\in \cL\}$ is regular; it is
\emph{rational} if there exists a regular word language
$L\subseteq \Sigma^*$ with $\cL=[L]$. It follows that every
recognizable trace language is rational; the converse implication is
known to fail (consider, e.g., the trace language
$\{[ab]^n\mid n\in\bN\}$ with $(a,b)\in I$ that is rational, but not
recognizable).

Since rational trace languages are homomorphic images of rational word
languages, we obtain that $\cL\subseteq\bM$ is rational if, and only
if, it can be constructed from finite trace languages using the
operations union, multiplication, and Kleene star.

\subsection{Trace-pushdown systems}

In this paper, we consider pushdowns that hold a trace $[w]\in\bM$
which can be accessed at the suffix, only.

A \emph{pushdown system} is a tuple $\cP=(Q,\Sigma,D,\Delta)$ where
$Q$ is a finite set of \emph{states}, $(\Sigma,D)$ a dependence
alphabet, and $\Delta\subseteq Q\times\Sigma\times\Sigma^*\times Q$ a
finite set of \emph{transitions}. For $(p,a,w,q)\in\Delta$, we
regularly write $p\xrightarrow{a\mid w}q$.

The set of configurations $\Conf(\cP)$ of $\cP$ is $Q\times\bM$. For
two configurations $(p,[u]),(q,[v])\in \Conf(\cP)$, we set
$(p,[u])\vdash(q,[v])$ if there is a transition $(p,a,w,q)\in\Delta$
and a word $x\in \Sigma^*$ such that $[u]=[xa]$ and $[v]=[xw]$. Note
that $[u]=[xa]$ is equivalent to saying $[u]=[x]\cdot[a]$ and
similarly $[v]=[xw]$ is equivalent to $[v]=[x]\cdot[w]$. Hence
$(p,s)\vdash(q,t)$ for $p,q\in Q$ and $s,t\in\bM$ iff there is a
transition $(p,a,w,q)\in\Delta$ such that the trace $t$ results from
the trace $s$ by replacing the suffix $[a]$ by $[w]$. The reflexive
and transitive closure of the one-step relation $\vdash$ is denoted
$\vdash^*$. 

\begin{definition}
  Let $\cP=(Q,\Sigma,D,\Delta)$ be a pushdown system. Its
  \emph{configuration graph with reachability} is the structure
  \[
    \ConfGraph(\cP)=\bigl(Q\times\bM,\vdash,\vdash^*,
    (\state_p)_{p\in Q},(c)_{c\in Q\times\bM}\bigr)
  \]
  consisting of the set of configurations $\Conf(\cP)=Q\times\bM$, the
  one-step relation~$\vdash$, the reachability relation~$\vdash^*$,
  the unary state predicates $\state_p=\{p\}\times\bM$, and a constant
  for every configuration of $\cP$.
\end{definition}
If $D=\Sigma\times\Sigma$, then this structure can be interpreted in
the full tree. Hence, by Rabin's tree theorem \cite{Rab69}, its
monadic second order theory is decidable. This is not true in general
as the following result implies.

\begin{theorem}[cf.~D'Osualdo, Meyer, Zetzsche~\relax{\cite[Thm.~2.1]{OsuMZ16}}]
  Let $(\Sigma,D)$ be a dependence alphabet. The first-order theory of
  $\ConfGraph(\cP)$ is decidable for every pushdown system
  $\cP=(Q,\Sigma,D,\Delta)$ if, and only if, any letter from $\Sigma$
  is independent from at most one letter.
\end{theorem}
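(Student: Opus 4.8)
The statement is an equivalence, so the plan is to prove the two implications separately, and first to reformulate the combinatorial hypothesis in structural terms. The condition ``any letter is independent from at most one letter'' says exactly that the independence relation $I$, being symmetric and irreflexive, has maximal degree one; equivalently, the graph $(\Sigma,I)$ is a matching, a disjoint union of isolated vertices and single edges. On the level of the monoid this means that $\bM$ is a free product of factors, one per block of the matching, where an isolated vertex $c$ contributes a copy of $\{c\}^*\cong\bN$ and an edge $\{a,b\}$ with $a\parallel b$ contributes a copy of $\{a,b\}^*/{\sim}\,\cong\bN^2$ (any two letters from different blocks are dependent, so no commutation crosses block boundaries). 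I would keep this free-product picture in mind throughout.

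For the ``only if'' direction I would argue by contraposition: assuming some letter $a$ is independent from two distinct letters $b\neq c$, I would exhibit a single pushdown system $\cP=(Q,\Sigma,D,\Delta)$ whose first-order theory is undecidable. The plan is to simulate a two-counter Minsky machine $M$, storing its two counters on the stack and placing the control of $M$ into $Q$, so that $M$ halts from the zero configuration if and only if $(q_0,1)\vdash^*(q_f,t)$ for some trace $t$. Halting is then captured by the existential sentence $\exists x\,(\state_{q_f}(x)\wedge c_0\vdash^* x)$, where $c_0$ is the constant for the initial configuration and $\vdash^*$ is available in the signature; since halting of two-counter machines is undecidable, even the existential fragment of the theory of $\ConfGraph(\cP)$ is undecidable. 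Incrementing and decrementing a counter is routine (push, respectively pop, a letter, the latter being enabled precisely when the counter is nonzero). The hard part, and the place where both independences $a\parallel b$ and $a\parallel c$ are essential, is implementing a zero-test for each of the two counters simultaneously: because $a$ commutes with both of the other stack letters, the relevant letters can be exposed at the suffix of the trace and tested for absence independently, whereas when every letter has at most one independent partner this simultaneous access provably collapses (and indeed the ``if'' direction below shows decidability). This construction is essentially the mechanism behind Example~\ref{E-undecidable}, specialised to the minimal hypothesis that one letter has two independent partners.

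For the ``if'' direction I would show that, when $(\Sigma,I)$ is a matching, the structure $\ConfGraph(\cP)$ has a decidable first-order theory for every choice of finite control $\Delta$. The plan is to establish that the configuration graph is an automatic structure: using the free-product decomposition, a reachable stack content is a sequence of blocks, each an element of a factor $\bN$ or $\bN^2$, and the one-step relation $\vdash$ as well as its closure $\vdash^*$ act on the topmost block in a way that is uniformly rational in the word encoding of these normal forms. One then obtains synchronous two-tape automata for $\vdash$ and $\vdash^*$ and for the constants and state predicates, whence decidability of the first-order theory follows from the standard theory of automatic structures. This is the special case of the present paper's general automaticity argument in which no loop-connectedness assumption is needed, and it also matches the positive direction of D'Osualdo, Meyer and Zetzsche \cite{OsuMZ16}; the routine part is to reconcile the two formulations, while the conceptual content lies in the uniform rationality of the block operations, which is exactly what fails once a letter acquires a second independent partner.
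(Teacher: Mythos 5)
First, a remark on the comparison itself: the paper contains \emph{no} proof of this statement --- it is imported verbatim from D'Osualdo, Meyer and Zetzsche \cite{OsuMZ16} as background, and the paper's own machinery (tPDS conditions (P1)--(P2), loop-connectedness, fnf-automaticity) neither proves it nor depends on it. So your proposal can only be judged against what a correct proof would require, and there it has a fatal gap in the ``only if'' direction. You want the pushdown system \emph{itself} to simulate a two-counter Minsky machine faithfully, so that halting becomes the existential sentence $\exists x\,(\state_{q_f}(x)\wedge c_0\vdash^* x)$; that amounts to claiming that plain configuration reachability is undecidable for these systems. The mechanism you invoke --- that the relevant letters ``can be exposed at the suffix of the trace and tested for absence'' --- does not exist in this model: a transition $(p,a,w,q)$ is enabled at $(p,t)$ iff $t=t'\cdot[a]$ for some $t'$, so a system can test the \emph{presence} of a letter at the suffix but can never condition a move on its \emph{absence}. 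The letter $a$ with two independent partners therefore yields only a \emph{blind} counter (decrement enabled iff positive, no zero test). Concretely: if $a,b,c$ are mutually independent --- a case your contraposition must also cover --- any system over these letters is literally a $3$-dimensional VASS, whose state reachability is decidable, so your existential sentence is decidable there and no faithful Minsky simulation can exist; if instead $(b,c)\in D$, the system is a pushdown with one blind counter (a $1$-dimensional pushdown VASS), for which reachability is likewise not known to be undecidable (coverability is even decidable), so your reduction would at best settle a well-known open problem.

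The missing idea is that the undecidability must come from the \emph{logic}, not from the system. That is exactly how the paper's own Example~\ref{E-undecidable} works for its (richer) alphabet: the system there is even a tPDS with decidable reachability, and PCP is encoded by two reachability chains that the formula forces to meet in a single existentially quantified configuration; and it is how \cite{OsuMZ16} handle the minimal patterns, where negation and quantifier alternation are genuinely needed (for three mutually independent letters even the existential positive fragment is decidable, so no purely existential reduction can work). Your ``if'' direction at least aims at a reasonable target --- automaticity of the configuration graph via the free-product decomposition of $\bM$ into $\bN$- and $\bN^2$-blocks, and that decomposition is correct --- but two claims in it are wrong as stated. It is not ``the special case of the present paper's general automaticity argument in which no loop-connectedness assumption is needed'': the theorem quantifies over \emph{all} pushdown systems over such an alphabet, and these typically violate (P1) and (P2) (for instance $a\parallel b$ together with a transition $p\xrightarrow{a\mid b}q$ already violates (P1)), so none of the paper's results apply. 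And $\vdash^*$ does not ``act on the topmost block'': a computation may destroy and rebuild arbitrarily many blocks, so a synchronous presentation requires a genuine argument --- a factorization into an untouched common block-prefix plus a ``dip'', together with effective semilinearity of the reachability relation within an $\bN^2$-block (flattability of $2$-dimensional VASS) --- none of which your sketch supplies.
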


In this paper, we take another point of view: we ask for properties of
the transitions of a pushdown system $\cP$ that guarantee the
decidability of the theory of $\ConfGraph(\cP)$. The decidability of
the theory of $\ConfGraph(\cP)$ implies in particular the decidability
of the reachability relation.  \cite[Thm.~5.1]{KoeK26} presents a
class of pushdown systems whose reachability relation is
decidable. These \emph{trace-pushdown systems} are defined as follows.

\begin{definition}\label{def:cpds}
  A \emph{trace-pushdown system} (or \emph{tPDS}, for short) is a
  pushdown system $\cP=(Q,\Sigma,D,\Delta)$ such that the following
  hold:
  \begin{enumerate}
  \item[(P1)] whenever $p\xrightarrow{a\mid w}q$ is a transition from
    $\Delta$, then $D(w)\subseteq D(a)$,\label{def:tPDS1}
  \item[(P2)] for any transitions
    $p\xrightarrow{a\mid v}q\xrightarrow{b\mid w}r$ from $\Delta$ with
    $a\parallel b$, there is a state $q'\in Q$ such that
    $p\xrightarrow{b\mid w}q'\xrightarrow{a\mid v}r$ are transitions
    from $\Delta$.
  \end{enumerate}
\end{definition}

The main result about these tPDS reads as follows:

\begin{theorem}[cf.~K\"ocher and Kuske \protect{\cite[Thm.~5.1]{KoeK26}}]
  The following problem is decidable in time polynomial in the size of
  $\cP$ and exponential in the twin index of $(\Sigma,D)$:\\
  input:
  \begin{minipage}[t]{.7\linewidth}
    trace-pushdown system $\cP=(Q,\Sigma,D,\Delta)$\\
    two configurations $(p,s)$ and $(q,t)$ of $\cP$
  \end{minipage}\\
  question: Does  $(p,s)\vdash^*(q,t)$ hold?
\end{theorem}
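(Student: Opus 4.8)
The plan is to prove that every reachability relation $\reach_{p,q}\subseteq\bM^2$ is \emph{lc-rational} in the sense of \cite{Kus23}, and then to decide the instance by testing membership of $(s,t)$ in this relation. Lc-rationality is a proper subclass of the rational trace relations enjoying two closure properties that rational relations lack in general and that drive the whole argument: it is closed under composition, and it is closed under the componentwise product with a direct product $\cR\times\cS$ in which one factor is recognizable and the other rational. Membership in an lc-rational relation is decidable from a finite description, so it suffices to compute such a description of $\reach_{p,q}$ for the given $p,q$.

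First I would treat two special shapes of tPDS. Call $\cP$ \emph{deleting} if every transition $p\xrightarrow{a\mid w}q$ has $w=\varepsilon$, so that each step strictly shortens the stack, and call $\cP$ \emph{non-deleting} if no step shortens the stack and, in a precise sense, all transitions read letters that are twins. In the deleting case the trace can only lose suffix letters, and conditions (P1)--(P2) let me fix a canonical order in which independent suffix letters are consumed: (P2) provides a diamond property reordering independent reads, while (P1) controls which letters may reappear at the top. This normalization presents $\reach_{p,q}$ as a product of the identity on the untouched part of the stack with a recognizable/rational pair describing the consumed part, so the product closure yields lc-rationality. In the non-deleting case the trace only grows; since the reads are twins, the material pushed above any fixed level commutes in a controlled way, and (P1) bounds the dependence footprint of what is written, again presenting $\reach_{p,q}$ in product form to which the same closure applies.

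The general case is assembled by decomposition. Every computation $(p,s)\vdash^*(q,t)$ can be split at a configuration of minimal stack content along the run, into an initial deleting descent down to this minimum followed by a non-deleting ascent away from it; ranging over the finitely many state pairs and the finitely many letters that can sit at this interface expresses $\reach_{p,q}$ as a finite union of compositions of the two special relations. Closure of the lc-rational relations under union and composition then shows every $\reach_{p,q}$ is lc-rational, and deciding membership of $(s,t)$ settles the instance. For the stated complexity I would track automaton sizes through the construction: each closure step costs a factor polynomial in the size of $\cP$, whereas enumerating the twin classes together with their admissible independent combinations --- needed both to normalize reading orders and to describe the push-languages --- contributes the factor exponential in the twin index.

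The step I expect to be the main obstacle is the product-closure argument underlying the two special cases. The difficulty is not rationality, which would follow from classical trace theory, but the stronger property of \emph{lc}-rationality: one must verify that it survives the componentwise product with a recognizable/rational direct product, and, in the non-deleting case, that the language of pushable traces is genuinely recognizable and not merely rational. This is exactly where (P1) and (P2) are indispensable, since they force enough commutation and enough dependence control to collapse the a priori rational description to the rigid lc-rational form; getting the normalization of independent reads precise enough that the relation admits the required synchronized, lc-compatible description is the delicate point.
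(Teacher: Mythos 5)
Your overall strategy is the one this paper attributes to \cite{KoeK25} (the paper itself only cites that work and sketches its method in the introduction): show that $\reach_{p,q}$ is lc-rational by first treating the purely deleting systems and the non-deleting systems with twin reads, using closure of lc-rational relations under componentwise product with a recognizable$\times$rational direct product, and then assemble the general case by union and composition. However, your assembly step has a genuine gap: the claim that every computation $(p,s)\vdash^*(q,t)$ splits, at a configuration of minimal stack content, into a purely deleting descent followed by a purely non-deleting ascent is false for a tPDS as given. Take $\Sigma=\{a,b\}$ with $D=\Sigma\times\Sigma$ and the two transitions $p\xrightarrow{a\mid ab}q$ and $q\xrightarrow{b\mid\varepsilon}r$; this is a tPDS, and $(p,[a])\vdash(q,[ab])\vdash(r,[a])$, yet no deleting transition applies in state $p$ and the only non-deleting runs from $(p,[a])$ end in $(p,[a])$ or $(q,[ab])$, so this computation admits no descent-then-ascent factorisation. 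The missing idea is \emph{saturation}: one must first close $\Delta$ under shortcut transitions (from $p\xrightarrow{a\mid ubv}q\xrightarrow{b\mid\varepsilon}r$ with $b\parallel v$, add $p\xrightarrow{a\mid uv}r$), a preprocessing that preserves $\vdash^*$; only in the saturated system can push--pop excursions be collapsed so that runs reorder into the shape your decomposition requires. The paper flags exactly this step after Definition~\ref{Def-loop-connected}, and it is a substantive part of the proof in \cite{KoeK25}, not a cosmetic normalisation.

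A second, smaller inaccuracy concerns the ascent: even after saturation it is not a single application of your non-deleting special case, because the letters read while the stack grows need not all be twins. Condition (P1) only guarantees that the dependence sets weakly decrease along chains of pushed-then-read letters, so the ascending part decomposes into several phases, each reading within one twin class; accordingly $\reach_{p,q}$ is a union of compositions of $\reach^{\varepsilon}_{r,s}$ with possibly several twin-restricted relations $\reach^{T}_{r,s}$, which is how this paper invokes \cite[Theorem~5.12]{KoeK25} in the proof of Theorem~\ref{T-reachability-fnf-automatic}, and which is also a natural source of the exponential dependence on the twin index. Your phrase ``finite union of compositions of the two special relations'' is compatible with this, but your explicit two-phase description is not.
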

We also constructed pushdown systems that satisfy only one of the two
conditions (P1) and (P2) and have an undecidable reachability
relation.

The following example shows that the first-order theory of
$\ConfGraph(\cP)$ is not uniformly decidable for the class of all tPDS.

\begin{example}\label{E-undecidable}
  We want to reduce Post's correspondence problem to the existential
  theory. To this aim, let $\Sigma=\{a,b,a',b',\top\}$ with
  \[
    D= \{a,b\}^2\cup\{a',b'\}^2\cup(\Sigma\times\{\top\})\cup
    (\{\top\}\times\Sigma)\,.
  \]
  Furthermore, let $f\colon\{a,b\}^*\to\{a',b'\}^*$ be the monoid
  homomorphism with $f(a)=a'$ and $f(b)=b'$.

  An instance of Post's correspondence problem is a tuple
  $I=\bigl((u_i,v_i)\bigr)_{i\in[k]}$ with $k\in\bN$ and
  $u_i,v_i\in \{a,b\}^*$ for all $i\in[k]$. A solution is a nonempty
  tuple $(i_1,\cdots,i_n)$ with entries in $[k]$ such that
  \[
    u_{i_1}\,u_{i_2}\,\cdots\,u_{i_n}
    = v_{i_1}\,v_{i_2}\,\cdots\,v_{i_n}\,.
  \]
  The existence of a solution is undecidable \cite{Pos46}.
  
  Now let $I=\bigl((u_i,v_i)\bigr)_{i\in[k]}$ be an instance of Post's
  correspondence problem.  From this finite tuple of words, we
  construct the following tPDS $\cP$. It has four states $\iota,p,q,r$
  and the following transitions:
  \begin{itemize}
  \item
    $\iota\xrightarrow{\top\mid u_i\,f(v_i)\top}p\xrightarrow{\top\mid
      u_i\,f(v_i)\top}p\xrightarrow{\top\mid\top}r$ for all $i\in[k]$
  \item
    $\iota\xrightarrow{\top\mid x\,f(x)\top}q\xrightarrow{\top\mid
      x\,f(x)\top}q\xrightarrow{\top\mid\top}r$ for all $x\in\{a,b\}$.
  \end{itemize}
  We claim that the instance $I$ has a solution if, and only if, there
  are configurations $(p,t_1)$, $(q,t_2)$, and $(r,t_3)$ such that
  \[
    (\iota,\top)\vdash^*(p,t_1)\vdash^*(r,t_3)\text{ and }
    (\iota,\top)\vdash^*(q,t_2)\vdash^*(r,t_3)\,.
  \]
  In other words, iff $\ConfGraph(\cP)$ satisfies the formula
  \[
    \varphi=\exists c_1,c_2,c_3\colon
    \begin{array}[t]{ll}
      & state_p(c_1)\land\state_q(c_2)\land\state_r(c_3)\\
      \land &(\iota,[\top])\vdash^* c_1 \vdash^* c_3\\
      \land &(\iota,[\top])\vdash^* c_2 \vdash^* c_3\,.
    \end{array}
  \]
  The formula $\varphi$ expresses the existence of a solution since
  $(\iota,\top)\vdash^*(p,t_1)\vdash^*(r,t_3)$ is equivalent to the
  existence of $1\le n\le N$ and $i_1,\dots,i_N\in[k]$ such that
  \begin{align*}
    t_1&=[u_{i_1}\,f(v_{i_1})\,u_{i_2}\,f(v_{i_2})\cdots u_{i_n}\,f(v_{i_n}) \top]
    \text{ and }\\
    t_3&=[u_{i_1}\,f(v_{i_1})\,u_{i_2}\,f(v_{i_2})\cdots u_{i_N}\,f(v_{i_N})\top]\,.
  \end{align*}
  Since $u_i\parallel f(v_j)$ for all $i,j\in[k]$, we have
  \begin{align*}
    t_3&=[u_{i_1}\,f(v_{i_1})\,u_{i_2}\,f(v_{i_2})\cdots u_{i_N}\,f(v_{i_N})\top]\\
    &=[u_{i_1}\,u_{i_2}\cdots u_{i_N}\,f(v_{i_1})\,f(v_{i_2})\cdots f(v_{i_N})\top]\\
    &=[u_{i_1}\,u_{i_2}\cdots u_{i_N}\,f(v_{i_1}\,v_{i_2}\cdots v_{i_N})\top]\,.
  \end{align*}
  Similarly, $(\iota,\top)\vdash^*(q,t_2)\vdash^*(r,t_3)$ is
  equivalent to the existence of $1\le m\le M$ and
  $x_1,x_2,\dots,x_M\in\{a,b\}$ such that
  \begin{align*}
    t_2&=[x_1\,f(x_1)\,x_2\,f(x_2)\cdots x_m\,f(x_m) \top]
    \text{ and }\\
    t_3&=[x_1\,f(x_1)\,x_2\,f(x_2)\cdots x_M\,f(x_M)\top]\\
    &=[x_1x_2\cdots x_M\,f(x_1x_2\cdots x_M)\top]\,.
  \end{align*}
  Thus, the formula $\varphi$ holds if, and only if, there exists a
  word $w\in\{a,b\}^*$ with
  \[
    u_{i_1}\,u_{i_2}\cdots u_{i_N}\,f(v_{i_1}\,v_{i_2}\cdots v_{i_N})
    = w\,f(w)\,.
  \]
  Since the function $f$ is bijective, this is equivalent to
  \[
    u_{i_1}\,u_{i_2}\cdots u_{i_N}=v_{i_1}\,v_{i_2}\cdots v_{i_N}\,,
  \]
  i.e., to the existence of a solution for $I$.    \endproof
\end{example}
In order to rule out this undecidability, we restrict the class of
pushdown systems further.

\begin{definition}\label{Def-loop-connected}
  Let $\cP=(Q,\Sigma,D,\Delta)$ be a tPDS.
  \begin{enumerate}
  \item The tPDS $\cP$ is \emph{saturated} if (for any $p,q,r\in Q$,
    $a,b\in\Sigma$, and $u,v\in\Sigma^*$)
    \[
      p\xrightarrow{a\mid ubv}q\xrightarrow{b\mid\varepsilon}r
      \text{ and }b\parallel v
      \text{ imply }p\xrightarrow{a\mid uv}r\,.
    \]

  \item The tPDS $\cP$ is \emph{loop-connected} if it is saturated and
    the following holds (for all $n\in\bN$, $p_i\in Q$,
    $a_i\in \Sigma$, $u_i,v_i\in\Sigma^*$ for $i\in[n]$). If
    \[
      p_0\xrightarrow{a_0\mid u_1a_1v_1} p_1 \xrightarrow{a_1\mid u_2a_2v_2}
      p_2 \xrightarrow{a_2\mid u_3a_3v_3}\cdots \xrightarrow{a_{n-1}\mid
        u_n a_n v_n} p_n
    \]
    with $a_i\parallel v_i$ for all $i\in[n]$ and
    $(p_0,a_0)=(p_n,a_n)$, then the set $\alphabet(u_1u_2\cdots u_n)$
    is connected.
  \end{enumerate}
\end{definition}
Suppose we have
$p\xrightarrow{a\mid ubv}q\xrightarrow{b\mid\varepsilon}r$ and
$b\parallel v$. Then, for any trace $x$, one gets
\[
  (p,x\cdot[a])\vdash(q,x\cdot[ubv])=(q,x\cdot[uvb])\vdash(r,x\cdot[uv])\,.
\]
If $\cP$ is saturated, we can take the ``shortcut''
$(p,x\cdot[a])\vdash(r,x\cdot[uv])$ in a single step. From an
arbitrary tPDS, one can compute a saturated tPDS by adding transitions
that has the same reachability relation \cite[Pro.~5.18]{KoeK26}. It therefore
makes sense to only consider saturated systems.

The idea of loop-connectedness is as follows. Suppose there are
transitions as in the definition. Since $\cP$ is a tPDS that, in
certain transitions, can replace $a_i$ by a word containing $a_{i+1}$,
we get
\[
  D(a_n)\subseteq D(a_{n-1})\subseteq D(a_{n-2})\subseteq\cdots
  \subseteq D(a_0)=D(a_n)
\]
implying $D(a_i)=D(a_0)$ for all $i\in[n]$.  Next let
$i\in\{0,\dots,n-1\}$ and suppose $b$ is a letter that occurs in
$v_{i+1}$. Since the tPDS $\cP$ can replace $a_i$ with
$u_{i+1}a_{i+1}v_{i+1}$, we get
$b\in D(u_{i+1}a_{i+1}v_{i+1})\subseteq D(a_i)=D(a_{i+1})$ and
therefore $(b,a_{i+1})\in D$. As this contradicts
$a_{i+1}\parallel v_{i+1}$, the letter $b$ cannot exist, i.e.,
$v_{i+1}$ is the empty word~$\varepsilon$. Consequently, the
transitions from the definition are actually of the form
\[
  p_0\xrightarrow{a_0\mid u_1a_1} p_1 \xrightarrow{a_1\mid u_2a_2}
  p_2 \xrightarrow{a_2\mid u_3a_3}\cdots \xrightarrow{a_{n-1}\mid
    u_n a_n} p_n
\]
with $(p_0,a_0)=(p_n,a_n)$.  Then the tPDS can make the following
computation
\begin{align*}
  (p_0,[a_0]) & \vdash (p_1,[u_1a_1])\\
              & \vdash (p_2,[u_1u_2a_2])\\
              &\vdots\\
     & \vdash (p_n,[u_1u_2\cdots u_n a_n])\\
     & = (p_0,[u_1u_2\cdots u_n a_0])\,.
\end{align*}
In other words, the tPDS can replace any trace $[u\,a_0]$ by the trace
$[u\,u_1\cdots u_n\,a_0]$ and therefore, by repeating this loop, by
$[u\,(u_1\cdots u_n)^m\,a_0]$ for any number $m$. The requirement is
that the alphabet of the looped word $u_1u_2\cdots u_n$ shall be
connected.

As a side-remark, we indicate that loop-connectedness is difficult to
verify.

\begin{proposition}
  The following is coNP-hard.

  input: saturated tPDS $\cP=(Q,\Sigma,D,\Delta)$

  question: is $\cP$ loop-connected
\end{proposition}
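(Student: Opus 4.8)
The plan is to reduce the (NP-complete) satisfiability problem for propositional formulas in conjunctive normal form to the complement of loop-connectedness; equivalently, I reduce the co-NP-complete unsatisfiability problem to loop-connectedness itself, which yields the claimed co-NP-hardness. So from a CNF formula with variables $x_1,\dots,x_m$ and clauses $C_1,\dots,C_k$ I construct, in polynomial time, a saturated tPDS $\cP$ that is loop-connected if, and only if, the formula is unsatisfiable.

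First I would make the tPDS axioms trivial to satisfy, so that the only real content lies in the connectivity condition. To this end every transition reads one fixed letter $a$ and writes a word of the form $wa$ with $w\in(\Sigma\setminus\{a\})^*$, and $a$ is declared dependent on every letter of $\Sigma$. Then (P1) holds because $D(wa)\subseteq\Sigma=D(a)$; (P2) holds vacuously since no two consecutive transitions read independent letters (every transition reads $a$ and $a\not\parallel a$); and saturation holds vacuously since no transition writes $\varepsilon$. Under this convention the relevant loops are exactly the directed cycles of the control graph, the carried letter is always $a$ (so all read letters are trivially twins and every $v_i$ is empty), and the set to be tested for connectedness is the union $\alphabet(u_1\cdots u_n)$ of the written letters (other than $a$) along the cycle. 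I would arrange the control as a linear chain of choice gadgets closed by a single return edge, so that the only cycles traverse the whole chain (one or more times) and each lap selects exactly one branch in each gadget.

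The heart of the construction is the choice of $(\Sigma,D)$ so that the collected letters form a connected set precisely when the selected choices are \emph{inconsistent}. I would use two independent poles $\alpha\parallel\beta$, forced into every collected set by the return edge. For each variable $x_i$ there is a gadget with a true-branch writing a value letter $t_i$ and a false-branch writing $f_i$, where $t_i,f_i$ are dependent on $\beta$ only. For each clause $C_j$ there is a gadget with one branch per literal of $C_j$, the branch for literal $\ell$ writing a witness letter $s_{j,\ell}$ dependent on $\alpha$ only. Finally I add a \emph{conflict} edge to $D$ between each witness and the value letter that falsifies its literal (so $s_{j,x_i}$ is dependent on $f_i$, and $s_{j,\lnot x_i}$ on $t_i$), and declare all remaining pairs independent. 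The design guarantees that in any collected set every letter besides $\alpha,\beta$ is adjacent to exactly one pole; hence the set is connected iff the two poles lie in a common component, which happens iff some conflict edge is active, i.e.\ iff some witnessed literal is false under the chosen truth values. Thus a cycle yields a disconnected collected set iff it witnesses every clause by a literal that is true under the chosen assignment, i.e.\ iff it exhibits a satisfying assignment; so $\cP$ is loop-connected iff the formula is unsatisfiable.

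Two points need care, and the second is the main obstacle. The first is that a cycle may traverse the chain several times, collecting several value letters per variable and several witnesses per clause; here I would observe that adding letters can only merge components, so if some multiply-traversed union separates the poles then so does the sub-union coming from a single lap, which corresponds to a genuine assignment together with a witnessing. The second, more delicate, point is to rule out \emph{spurious} disconnections: a priori the collected set could be disconnected not because the poles are separated but because some letter floats off in its own component, which would wrongly certify non-loop-connectedness for an unsatisfiable instance. This is exactly what the pole-adjacency discipline prevents: since every collected letter is dependent on $\alpha$ or on $\beta$ (or, in the inconsistent case, joins the two poles through an active conflict edge), no collected letter is ever isolated, so disconnectedness of the collected set coincides with separation of the poles. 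Establishing this equivalence, together with the routine verification that the gadget chain admits no short-cut cycles, completes the reduction.
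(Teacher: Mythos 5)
Your reduction is correct, but it takes a genuinely different route from the paper. Both proofs share the same first move: a universally dependent anchor letter (your $a$, the paper's $\top$) that every transition reads and re-writes at the end of its output word, which makes (P1), (P2) and saturation hold trivially and turns loop-connectedness into the purely combinatorial statement that every cycle of the control graph collects a connected set of letters. At that point the paper stops: it observes that this property of NFAs over a dependence alphabet is exactly the problem shown coNP-complete by Muscholl and Petersen \cite[Prop.~4]{MusP99}, so the anchor-letter construction alone is the whole reduction. You instead re-prove the hardness of that combinatorial problem from scratch, encoding CNF-unsatisfiability with two independent poles, value letters attached to one pole, clause-witness letters attached to the other, and conflict edges that join the poles exactly when a witnessed literal is falsified. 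Your two points of care both check out: since every collected letter is adjacent to a pole, disconnection coincides with separation of the poles (no spurious isolated components), and since dropping letters cannot create conflict edges, a disconnected multi-lap union restricts to a disconnected single lap, i.e.\ to a genuine assignment together with true witnesses for all clauses. The trade-off is the expected one: the paper's argument is a few lines but rests on an external result, while yours is longer but self-contained and in effect re-derives the hardness direction of \cite[Prop.~4]{MusP99}.
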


\begin{proof}
  Let $(\Sigma,D)$ be a dependence alphabet and $\cA=(Q,\Sigma,I,T,F)$
  an nfa. From $(\Sigma,D)$ and $\cA$, we construct the pushdown
  system $\cP=(Q,\Gamma,D',\Delta)$ setting
  \begin{itemize}
  \item $\Gamma=\Sigma\cup\{\top\}$,
  \item $D'=D\cup (\Sigma\times\{\top\})\cup(\{\top\}\times\Sigma)$, and
  \item $\Delta=\{(p,\top,a\top,q)\mid (p,a,q)\in T\}$.
  \end{itemize}
  Then it is easily checked that $\cP$ is a saturated tPDS since none
  of its transitions writes $\varepsilon$ onto the
  pushdown. Furthermore, it is loop-connected if, and only if, for
  every loop
  \[
    (p_0,a_1,p_1)(p_1,a_2,p_2)\cdots (p_{n-1},a_n,p_n)
  \]
  in $\cA$ with $p_n=p_0$, the set
  $\{a_1,a_2\cdots,a_n\}\subseteq\Sigma$ is connected.

  The set of nfas with this property is coNP-complete
  \cite[Prop.~4]{MusP99}, thus we reduced from a coNP-complete problem.
\end{proof}

We can now formulate the main result of this paper.
\begin{theorem}\label{T-main}
  The following problem is decidable.\\
  input:
  \begin{minipage}[t]{.7\linewidth}
    loop-connected tPDS $\cP$\\
    first-order formula $\varphi$ in the language of $\ConfGraph(\cP)$
  \end{minipage}\\
  question: Does $\varphi$ hold in $\ConfGraph(\cP)$?
\end{theorem}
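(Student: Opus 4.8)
The plan is to show that $\ConfGraph(\cP)$ is an automatic structure, since the first-order theory of any automatic structure is uniformly decidable \cite{Hod82,KhoN95,BluG00}; thus it suffices to produce, effectively from $\cP$, an automatic presentation and then run the standard decision procedure on $\varphi$. As the word encoding of a trace $[w]\in\bM$ I would use its Foata normal form $\fnf([w])$, the canonical representative built from nonempty cliques of the independence graph, and encode a configuration $(p,[w])$ by prefixing the state $p$ to $\fnf([w])$. Recall that a relation on $\bM$ is \emph{fnf-automatic} if the convolution of the Foata normal forms of related traces is accepted by a synchronous two-tape automaton.

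It is then routine to verify that the set of encodings of $Q\times\bM$ is regular, that every state predicate $\state_p$ and every constant $c\in Q\times\bM$ is recognizable, and that the one-step relation $\vdash$ is fnf-automatic: a single transition $p\xrightarrow{a\mid w}q$ alters only a bounded suffix of the Foata normal form, so a synchronous automaton can check it while scanning the two tapes in lockstep. The entire difficulty therefore lies in showing that the reachability relation $\vdash^*$, that is, each relation $\reach_{p,q}\subseteq\bM^2$, is fnf-automatic.

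Following the line of argument used for lc-rationality, I would first dispose of two special cases. For a \emph{deleting} system, in which every transition has the form $p\xrightarrow{a\mid\varepsilon}q$ and the pushdown is shortened in every step, $\reach_{p,q}$ can be described using that the class of fnf-automatic relations is closed under componentwise product with direct products of two recognizable trace languages (Lemma~\ref{L-fnf-sr}); the deleting case then follows immediately. For a \emph{non-deleting} system, in which the pushdown is never shortened and all transitions read similar letters, the new ingredient supplied by loop-connectedness is that the set of traces that can be pushed onto the stack is \emph{recognizable} rather than merely rational: loop-connectedness forces each looped trace $[u_1\cdots u_n]$ to be connected, which upgrades the rationality available for arbitrary tPDS to full recognizability. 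Combined again with the closure property of Lemma~\ref{L-fnf-sr}, this makes $\reach_{p,q}$ fnf-automatic in the non-deleting case as well.

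The general reachability relation is then assembled from the two special cases by \emph{union} and \emph{composition}, decomposing every computation into maximal non-deleting phases separated by deleting steps exactly as in \cite{KoeK25}; since the class of fnf-automatic relations is closed under union and composition, $\reach_{p,q}$ is fnf-automatic, which completes the proof that $\ConfGraph(\cP)$ is automatic. I expect the main obstacle to be the non-deleting case: first, showing that loop-connectedness truly yields \emph{recognizability} (and not just rationality) of the writable traces, and second, propagating fnf-automaticity through composition. The latter is delicate because even rationality of trace relations is not preserved under composition in general, so the stronger automaticity must be maintained by carefully controlling how the Foata normal forms align across the gluing of successive phases.
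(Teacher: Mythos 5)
Your proposal follows essentially the same route as the paper's proof: present the configuration graph as an automatic structure over Foata normal forms (the paper does this via the auxiliary structure $\cS(\cP)$ and the formula translation of Lemma~\ref{L-reduction}, you by prefixing the state to the encoding --- a cosmetic difference), reduce everything to fnf-automaticity of $\step_{p,q}$ and $\reach_{p,q}$, handle the deleting and the non-deleting/twin-reading special cases via Lemma~\ref{L-fnf-sr}, use loop-connectedness to upgrade the pushable trace languages from rational to recognizable, and assemble the general case by union and composition as in \cite{KoeK25}. This matches Theorem~\ref{T-reachability-fnf-automatic} and its proof almost point for point.

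One justification in your sketch is, however, wrong as stated: a single transition $p\xrightarrow{a\mid w}q$ does \emph{not} alter only a bounded suffix of the Foata normal form. For instance, if $a\parallel b$ and the transition reads $a$ and writes $\varepsilon$, then the step relates $[b^na]$ to $[b^n]$, whose Foata normal forms are $\{a,b\}\,\{b\}^{n-1}$ and $\{b\}^{n}$: they differ in the \emph{first} clique, at unbounded distance from the end of the words. In general, the letter $a$ being removed and the letters of $w$ being inserted land at positions determined by the levels of their dependent letters inside the remaining stack content $x$, so the discrepancies between the two normal forms can occur anywhere, not in a bounded suffix. The conclusion that $\vdash$ is fnf-automatic is nevertheless correct, but it requires the nontrivial construction you already invoke elsewhere: write
\[
  \step_{p,q}=\bigcup_{(p,a,w,q)\in\Delta}\Id_\bM\cdot\bigl(\{[a]\}\times\{[w]\}\bigr)
\]
and apply Lemma~\ref{L-fnf-sr}, noting that the singleton languages $\{[a]\}$ and $\{[w]\}$ are recognizable; this is exactly how the paper handles the one-step relation. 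With that local repair, your argument coincides with the paper's proof.
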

Before starting the proof, we demonstrate that this decidability does
not hold for monadic second-order logic.

\begin{example}\label{E-MSO}
  Let $\cP=(Q,\Sigma,D,\Delta)$ be the pushdown system with $Q=\{q\}$,
  $\Sigma=\{a,b\}$, $(a,b)\in I$, and $(q,a,aa,q),(q,b,bb,q)$ the only
  transitions. Since any transition writes the letter it reads, $\cP$
  is a tPDS. Since no transition writes $\varepsilon$, it is
  saturated. Note that in any loop as in the
  Definition~\ref{Def-loop-connected}, all letters $a_i$ and words
  $u_i,v_i$ belong to $a^+$ or all of them belong to $b^+$. Hence,
  $\cP$ is even loop-connected.

  We have $(q,[a^kb^\ell])\vdash^*(q,[a^mb^n])$ if, and only if,
  $k\le m$ and $\ell\le n$. Hence $\ConfGraph(\cP)$ contains the
  infinite grid $(\bN\times\bN,\le)$ implying that the monadic
  second-order theory is undecidable (cf., e.g.,
  \cite[Thm.~5.6]{CouE12}).
\end{example}

To prove Theorem~\ref{T-main}, we work with the following structure.
\begin{definition}
  Let $\cP=(Q,\Sigma,D,\Delta)$ be a pushdown system.  For $p,q\in Q$,
  we let $\reach_{p,q}\subseteq\bM^2$ denote the set of pairs $(s,t)$
  of traces such that $(p,s)\vdash^*(q,t)$. Furthermore,
  $\step_{p,q}\subseteq\bM^2$ is the set of pairs $(s,t)$ such that
  $(p,s)\vdash(q,t)$.

  Define the structure
  \[
    \cS(\cP)=\bigl(\bM,(\step_{p,q},\reach_{p,q})_{p,q\in Q},(t)_{t\in\bM}\bigr)
  \]
  with universe the set of traces over $(\Sigma,D)$, the one-step
  relations $\step_{p,q}$ and the reachability relations
  $\reach_{p,q}$ for all $p,q\in Q$, and all traces $t\in\bM$ as
  constants.
\end{definition}

To justify that we work with this structure, we first prove that the
theory of $\ConfGraph(\cP)$ can be reduced to that of~$\cS(\cP)$.
This proof can be understood as interpretation of the configuration
graph $\ConfGraph(\cP)$ in the structure $\cS(\cP)$. We refrain from
defining this notion in full generality and listing its consequences
(cf.\ \cite[p.~212]{Hod93}) as this would be more complicate than the
direct proof.

\begin{lemma}\label{L-reduction}
  From a pushdown system $\cP=(Q,\Sigma,D,\Delta)$, a first-order
  formula $\varphi(x_1,\dots,x_n)$ in the language of
  $\ConfGraph(\cP)$, and a tuple of states
  $\ol{p}=(p_1,\dots,p_n)\in Q^n$, one can compute a formula
  $\varphi_{\ol{p}}(x_1,\dots,x_n)$ in the language of $\cS(\cP)$ such
  that
  \[
    \ConfGraph(\cP)\models\varphi((p_i,t_i))_{i\in[n]})
    \iff
    \cS(\cP)\models\varphi_{\ol{p}}(t_1,\dots,t_n)
  \]
  holds for any $t_1,\dots,t_n\in\bM$.
\end{lemma}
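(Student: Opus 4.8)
The plan is to proceed by structural induction on $\varphi$, exploiting that a configuration $(p,t)\in Q\times\bM$ is just a finite tag $p\in Q$ together with a trace $t\in\bM$. Since $Q$ is finite, a quantifier ranging over configurations can be replaced by a finite disjunction (resp.\ conjunction) over $Q$ that fixes the tag, followed by an ordinary quantifier over the trace component. To make the induction go through cleanly, I would prove the following stronger statement: for every formula $\psi$ in the language of $\ConfGraph(\cP)$ with free variables among $x_1,\dots,x_m$ and every tag assignment $\ol r=(r_1,\dots,r_m)\in Q^m$, one can compute a formula $\psi_{\ol r}$ in the language of $\cS(\cP)$ with the same free variables such that
\[
  \ConfGraph(\cP)\models\psi\bigl((r_i,t_i)_{i\in[m]}\bigr)
  \iff
  \cS(\cP)\models\psi_{\ol r}(t_1,\dots,t_m)
\]
holds for all traces $t_1,\dots,t_m\in\bM$. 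Lemma~\ref{L-reduction} is then the special case $m=n$, $\ol r=\ol p$.

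For the base case I would treat the atomic formulas, introducing for each term $\tau$ (a variable $x_i$ or a configuration constant $c=(p_c,t_c)$) its tag $\mathrm{st}(\tau)$ and its trace component $\mathrm{tr}(\tau)$: for a variable $x_i$ these are $r_i$ and the trace variable $x_i$, while for a constant $c=(p_c,t_c)$ they are $p_c$ and the trace constant $t_c$ (which is available in $\cS(\cP)$). Crucially, every tag is determined syntactically by $\ol r$. I would then translate an equality $\tau_1=\tau_2$ to $\mathrm{tr}(\tau_1)=\mathrm{tr}(\tau_2)$ if $\mathrm{st}(\tau_1)=\mathrm{st}(\tau_2)$ and to a fixed false formula (say $\lnot(t=t)$ for some constant $t$) otherwise; a state atom $\state_p(\tau)$ to a fixed true or false formula according to whether $\mathrm{st}(\tau)=p$; and $\tau_1\vdash\tau_2$ (resp.\ $\tau_1\vdash^*\tau_2$) to the atom $\step_{\mathrm{st}(\tau_1),\mathrm{st}(\tau_2)}(\mathrm{tr}(\tau_1),\mathrm{tr}(\tau_2))$ (resp.\ the analogous $\reach$ atom). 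Correctness of each case is immediate from the definitions of $\ConfGraph(\cP)$, $\step_{p,q}$, and $\reach_{p,q}$, using that $(p,s)=(q,t)$ iff $p=q$ and $s=t$.

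The inductive step for the Boolean connectives is immediate, the translation commuting with $\lnot$, $\land$, and $\lor$. For a quantifier I would set $(\exists y\,\psi)_{\ol r}=\bigvee_{q\in Q}\exists y\,\psi_{\ol r,q}$ and dually $(\forall y\,\psi)_{\ol r}=\bigwedge_{q\in Q}\forall y\,\psi_{\ol r,q}$, where $\ol r,q$ denotes the assignment $(r_1,\dots,r_m,q)$ that additionally tags the newly freed variable $y$ with $q$. The correctness here is exactly the observation that a witnessing configuration for $\exists y$ in $\ConfGraph(\cP)$ consists of some tag $q\in Q$ together with some trace, so quantifying over all configurations is the same as quantifying over all pairs $(q,\text{trace})$ with $q$ ranging over the finite set $Q$; the induction hypothesis applied to $\psi$ with the extended assignment $\ol r,q$ then yields the claim.

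I expect no serious obstacle here, as the argument is a routine relativization; the only points requiring care are bookkeeping ones. First, configuration constants must be split correctly into a tag and a trace constant. Second, equality and state atoms whose tags are incompatible must be replaced by fixed truth values, so that, e.g., $x_i=x_j$ with $r_i\neq r_j$ translates to falsity rather than to an ill-typed trace equation. Since $Q$ is finite and all tags are fixed by the assignment, every case distinction and every quantifier blow-up is effective, so the construction of $\varphi_{\ol p}$ from $\varphi$ and $\ol p$ is computable, as required.
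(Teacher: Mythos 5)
Your proposal is correct and follows essentially the same route as the paper's proof: a structural induction in which state tags are fixed syntactically by the tuple $\ol{p}$, atomic formulas are translated case-by-case (with incompatible tags becoming a fixed false formula), and existential quantification over configurations becomes a finite disjunction $\bigvee_{q\in Q}$ over states combined with quantification over traces. The only differences are presentational (you treat terms uniformly via tag/trace components and spell out $\forall$ and $\lnot$, which the paper leaves implicit), so nothing further is needed.
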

\begin{proof}
  The proof is by induction on the construction of $\varphi$.
  \begin{itemize}
  \item If $\varphi=(x_i=x_j)$ and $p_i=p_j$, then
    $\varphi_{\ol{p}}=(x_i=x_j)$.
  \item If $\varphi=(x_i=x_j)$ and $p_i\neq p_j$, then
    $\varphi_{\ol{p}}=(1\neq 1)$.
  \item If $\varphi=(x_i=(p,t))$ and $p_i= p$, then
    $\varphi_{\ol{p}}=(x_i= t)$.
  \item If $\varphi=(x_i=(p,t))$ and $p_i\neq p$, then
    $\varphi_{\ol{p}}=(1\neq 1)$.
  \item If $\varphi=\state_p(x_i)$ and $p_i=p$, then
    $\varphi_{\ol{p}}=(1=1)$.
  \item If $\varphi=\state_p(x_i)$ and $p_i\neq p$, then
    $\varphi_{\ol{p}}=(1\neq 1)$.
  \item If $\varphi=(x_i\vdash x_j)$, then
    $\varphi_{\ol{p}}=\step_{p_i,p_j}(x_i,x_j)$.
  \item If $\varphi=(x_i\vdash^* x_j)$, then
    $\varphi_{\ol{p}}=\reach_{p_i,p_j}(x_i,x_j)$.
  \item If $\varphi=\alpha\lor\beta$, then
    $\varphi_{\ol{p}}=\alpha_{\ol{p}}\lor\beta_{\ol{p}}$.
  \item If $\varphi=\lnot\alpha$, then
    $\varphi_{\ol{p}}=\lnot\alpha_{\ol{p}}$.
  \item If $\varphi=\exists x_{n+1}\colon\alpha$, then
    $\varphi_{\ol{p}}=\displaystyle\bigvee_{p_{n+1}\in Q}\exists
    x_{n+1}\colon\alpha_{\ol{p},p_{n+1}}$.
  \end{itemize}
\end{proof}

\begin{remark}
  The above lemma can also be shown for monadic second-order logic. In
  addition to the above translation, one defines:
  \begin{itemize}
  \item If $\varphi=\exists X\,\alpha$, then
    $\varphi_{\ol{p}}=\exists (X_p)_{p\in Q}\,\alpha_{\ol{p}}$.
  \item If $\varphi=(x_i\in X)$, then
    $\varphi_{\ol{p}}=(x_i\in X_{p_i})$.
  \end{itemize}
\end{remark}

If the theory of $\ConfGraph(\cP)$ is undecidable, then, by the above
lemma, the same applies to the theory of $\cS(\cP)$. This implies in
particular the following negative results.
\begin{itemize}
\item The first-order theory of $\cS(\cP)$ is not uniformly decidable
  for all tPDS (cf.~Example~\ref{E-undecidable}).
\item The monadic second-order theory of $\cS(\cP)$ is not uniformly
  decidable for all loop-connected tPDS (cf.~Example~\ref{E-MSO}).
\end{itemize}
Conversely, in order to prove Theorem~\ref{T-main}, it suffices to
demonstrate the following result.
\begin{theorem}\label{T-semi-main}
  The following problem is decidable.\\
  input:
  \begin{minipage}[t]{.7\linewidth}
    loop-connected tPDS $\cP$\\
    first-order formula $\varphi$ in the language of $\cS(\cP)$
  \end{minipage}\\
  question: Does $\varphi$ hold in $\cS(\cP)$?
\end{theorem}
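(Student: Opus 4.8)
The plan is to prove that the structure $\cS(\cP)$ is an automatic structure, from which decidability of its first-order theory follows by the standard result that automatic structures have decidable first-order theories. As the introduction signals, the word representation of choice is the Foata normal form $\fnf$ of a trace, so concretely I would show that each relation $\step_{p,q}$ and each $\reach_{p,q}$ is \emph{fnf-automatic}: that is, the set of convolutions of the Foata normal forms of pairs $(s,t)$ lying in the relation is accepted by a synchronous two-tape automaton. The constants $t\in\bM$ are trivially automatic (each is a singleton), and equality of traces corresponds to equality of Foata normal forms, so once the step- and reach-relations are handled, automaticity of the whole structure—and hence Theorem~\ref{T-semi-main}—follows.

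The relations $\step_{p,q}$ are the easy part: a single step replaces a suffix $[a]$ by $[w]$ according to a transition $p\xrightarrow{a\mid w}q$, and one checks directly that the induced relation on Foata normal forms is fnf-automatic (there are only finitely many transitions, and condition (P1) controls how the replacement interacts with the final Foata blocks). The substantial work is the reachability relations $\reach_{p,q}$, and here I would follow the decomposition strategy from the lc-rationality proof of \cite{KoeK25}. First I would reduce to two special cases: \emph{deleting} systems, in which no transition writes anything onto the pushdown, and \emph{non-deleting} systems, in which the pushdown never shrinks and all transitions read ``similar'' letters. For the deleting case one argues that $\reach_{p,q}$ is obtained by intersecting/multiplying with suitable recognizable languages and invokes the closure of fnf-automatic relations under product with direct products of two recognizable trace languages (Lemma~\ref{L-fnf-sr}). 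The general case is then assembled from the two special cases via union and composition, using that fnf-automatic relations are closed under these operations.

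I expect the main obstacle to be the non-deleting case, and this is precisely where loop-connectedness must be exploited. In the general lc-rationality argument one only knew that the set of traces that can be written onto the pushdown is \emph{rational}; but rationality is too weak to push the synchronous-automaton construction through, since rational trace languages need not be recognizable and Lemma~\ref{L-fnf-sr} requires recognizability. The key new lemma will therefore be that, for a loop-connected tPDS, the set of traces writable onto the pushdown is in fact \emph{recognizable}. This is where the combinatorial content of Definition~\ref{Def-loop-connected} enters: the saturation condition lets one take shortcuts that normalize the write-behaviour, and the loop-connectedness condition guarantees that any trace $[u_1\cdots u_n]$ produced by iterating a control loop has connected alphabet. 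Connectedness of every looped trace is exactly the property that prevents the ``pumping'' phenomenon (as in the rational-but-not-recognizable language $\{[ab]^n\mid n\in\bN\}$ with $a\parallel b$) and thereby upgrades rationality to recognizability. Establishing this recognizability result, and then threading it correctly through the product, composition, and union closure steps for the Foata-normal-form encoding, is the technical heart of the argument.
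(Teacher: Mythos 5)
Your proposal is correct and follows essentially the same route as the paper: automaticity of $\cS(\cP)$ via fnf-automaticity of $\step_{p,q}$ and $\reach_{p,q}$, the deleting/non-deleting decomposition inherited from \cite{KoeK25}, the crucial upgrade of the writable-trace languages from rational to recognizable using loop-connectedness (the paper's Lemma~\ref{lem:write1}, which rests on the fact that automata whose loops have connected alphabet accept recognizable trace languages), and reassembly of the general case via Lemma~\ref{L-fnf-sr}, union, and composition. The only cosmetic difference is your direct treatment of $\step_{p,q}$, which the paper handles instead as a finite union of products $\Id_\bM\cdot(\{[a]\}\times\{[w]\})$ and another application of Lemma~\ref{L-fnf-sr}.
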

The proof of this theorem can be found in Section~\ref{S-proof}.  It
is based on a special class of trace relations
(cf.~Definition~\ref{D-fnf-automatic} below) that we introduce and
study next. Later, it will turn out that the relations $\step_{p,q}$
and $\reach_{p,q}$ of a loop-connected tPDS belong to this class which
implies the decidability using the theory of automatic structures.

\section{fnf-automatic relations}

Let $A=\{a_1,\dots,a_n\}\subseteq\Sigma$.  The set $A$ is
\emph{independent} if $(a_i,a_j)\in I$ for all $1\le i<j\le n$; let
$\cF\subseteq 2^\Sigma$ denote the set of all independent sets. Note
that in particular $\emptyset$ is independent. It follows that, for
any permutation $\sigma\colon[n]\to[n]$, we have
\[
  a_1a_2\cdots a_n \sim a_{\sigma(1)} a_{\sigma(2)}\cdots a_{\sigma(n)}\,.
\]
Hence there is a unique trace $[A]=[a_1\cdots a_n]$. This mapping
$[.]\colon\cF\to\bM$  extends uniquely to a monoid homomorphism
$[.]\colon\cF^*\to\bM$.

A (possibly empty) word $A_1\,A_2\,\cdots\,A_n\in\cF^*$ is \emph{in
  extended Foata normal form} if $D(A_i)\supseteq A_{i+1}$ for all
$i\in[n-1]$. In other words, we have
\begin{center}
  for all $i\in[n-1]$ and $b\in A_{i+1}$, there exists $a\in A_i$ with
  $(a,b)\in D$.
\end{center}

\begin{example}\label{E-eFNF}
  Suppose $\Sigma=\{a,b,c\}$ with $D$ the reflexive and symmetric
  closure of $\{(a,b),(b,c)\}$ such that $(a,c)\in I$. Then
  $\{a,c\}$, $\{a\}$, $\{b\}$, $\{c\}$, $\emptyset$ are the only
  elements of $\cF$. Furthermore, the words
  $W=\{a,c\}\,\{b\}\,\{a,c\}$ and $\{a\}\,\{b\}\{a,c\}$ are in
  extended Foata normal form (even in Foata normal form, see
  below for the definition). Note that also the words
  $W\emptyset\,\emptyset$ and $\{a,c\}\,\{a,c\}\,\cdots\,\{a,c\}$
  are in extended Foata normal form.

  On the other hand, $\{a\}\,\{a,c\}$ is not in extended Foata normal
  form since $(a,c)\notin D$.
\end{example}

Let $W=A_1\,A_2\,\cdots\,A_n$ be in extended Foata normal form. Since
$D(\emptyset)=\emptyset$, there is $m\in\{0,\dots,n\}$ such that
$A_i=\emptyset\iff i>m$ for all $i\in[n]$, i.e., in a word in extended
Foata normal form, the letter $\emptyset$ can occur at the end of the
word, only.

A word over $\cF$ is \emph{in Foata normal form} if it is in extended
Foata normal form and none of its letters equals $\emptyset$. By the
very definition, any infix of any word in (extended) Foata normal form
is in (extended) Foata normal form, again.

Let $x\in\bM$. Then, by \cite[Theorem~1.2]{CarF69}, there is a unique
word $\fnf(x)\in\cF^*$ in Foata normal form with $x=[\fnf(x)]$; this
word is the \emph{Foata normal form of $x$}. In the situation of
Example~\ref{E-eFNF} with $(a,c)\in I$, $\fnf([a^nc^n])$ equals
\[
  W=\underbrace{\{a,c\}\,\{a,c\}\,\cdots\,\{a,c\}}_{n\text{ times}}
\]
since 
$W$ is in Foata normal form, $[\{a,c\}]$ is the trace $[ac]$ and
$[ac]^n=[(ac)^n]=[a^nc^n]$.

Suppose $W=A_1\,\dots\,A_n$ is in extended Foata normal form with
$x=[W]$. Then there exists $m\in\{0,\dots,n\}$ such that
$\fnf(x)=A_1\,A_2\,\cdots\,A_m$, i.e., the Foata normal form of $x$ is
a prefix of the word $W$. It follows that the set of words $W$ in
extended Foata normal form with $x=[W]$ equals $\fnf(x)\,\{\emptyset\}^*$.

Let $W=A_1\,A_2\,\cdots\,A_n$ be in Foata normal form. Then
any factorisation of $W$ into $U$ and $V$ trivially satisfies
$[W]=[U]\cdot[V]$; furthermore, $U$ and $V$ are the Foata normal forms
of the two factors $[U]$ and $[V]$, respectively. But there are
factorisations of $[W]$ into traces $x$ and $y$ such that no
factorsation of $W$ into $U$ and $V$ satisfies $x=[U]$ and $y=[V]$; in
particular, the Foata normal forms of $x$ and $y$ need not be factors
of the Foata normal form of $x\cdot y$. The following lemma describes
the factorisations of the trace $[W]$ in terms of the word $W$.

\begin{lemma}\label{L-fnf-product}
  Let $W=A_1\,\cdots\,A_n$ be a word in extended Foata normal form and
  $x,y\in\bM$. Then the following are equivalent:
  \begin{enumerate}
  \item[(1)] $x\cdot y=[W]$
  \item[(2)] There are sets $B_i\subseteq A_i$ for $i\in[n]$ such that
    \begin{enumerate}
    \item the word
      $(A_1\setminus B_1)\,(A_2\setminus B_2)\,\cdots\,(A_n\setminus
      B_n)$ is in extended Foata normal form,
    \item $B_i\parallel (A_j\setminus B_j)$ for all $1\le i<j\le n$, and
    \item
      $x=\bigl[(A_1\setminus B_1)\,(A_2\setminus
      B_2)\,\cdots\,(A_n\setminus B_n)\bigr]$ and
      $y=[B_1\,B_2\,\cdots\,B_n]$.
    \end{enumerate}
  \end{enumerate}
\end{lemma}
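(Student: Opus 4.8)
The plan is to prove the two implications separately, handling $(2)\Rightarrow(1)$ by a direct rearrangement and $(1)\Rightarrow(2)$ by passing to the dependence order of the trace $[W]$. For $(2)\Rightarrow(1)$, write $C_i=A_i\setminus B_i$. Since each $A_i$ is independent, $[A_i]=[C_iB_i]$, so $[W]=[C_1B_1C_2B_2\cdots C_nB_n]$. It therefore suffices to show that this trace equals $[C_1C_2\cdots C_n\,B_1B_2\cdots B_n]$, because the latter factorises as $[C_1\cdots C_n]\cdot[B_1\cdots B_n]=x\cdot y$ by (c). To pass from the first word to the second, I move each block $B_i$ to the right past $C_{i+1},\dots,C_n$. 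In the word $C_1B_1\cdots C_nB_n$ the block $B_i$ precedes the block $C_j$ exactly when $i<j$, and for these pairs condition (b) gives $B_i\parallel C_j$; hence every transposition needed to sort all $C$-blocks before all $B$-blocks (keeping the relative order of the $C$'s and of the $B$'s fixed) swaps independent, and thus distinct, letters and is legal modulo $\sim$. This yields $x\cdot y=[W]$; note that only (b) and (c) are used here.

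For $(1)\Rightarrow(2)$, I use the standard correspondence between factorisations of a trace and order ideals of its dependence order. View $[W]$ through its dependence order on the set $V$ of letter occurrences of $W=A_1\cdots A_n$, so that $V$ is partitioned into blocks, the occurrence of $a\in A_i$ lying in block $i$; recall that occurrences of dependent letters that are ordered in the representative $W$ are comparable in this order, and that factorisations $[W]=x\cdot y$ correspond to down-closed sets $P\subseteq V$ (carrying $x$) whose complement $U=V\setminus P$ is up-closed (carrying $y$). Given such $P,U$, define $B_i$ to be the set of labels of those occurrences in block $i$ that lie in $U$, so that $C_i:=A_i\setminus B_i$ is the set of labels of the block-$i$ occurrences in $P$ (each letter occurs once per block). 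Reading the $P$-occurrences of $W$ left to right gives a linearisation of the subposet induced on $P$, namely $C_1C_2\cdots C_n$, whose trace is $x$; symmetrically $[B_1\cdots B_n]=y$, which is (c).

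It remains to check (a) and (b), and both follow from up-closedness of $U$. For (a), let $b\in C_{i+1}$; by the extended Foata condition for $W$ there is $a\in A_i$ with $(a,b)\in D$. Were $a\in B_i$, its occurrence would lie in $U$; being dependent with $b$ and occurring earlier, it lies below $b$'s occurrence, forcing the latter into the up-closed set $U$ and contradicting $b\in C_{i+1}\subseteq P$. Hence $a\in C_i$, so $D(C_i)\supseteq C_{i+1}$. For (b), take $a\in B_i$ and $c\in C_j$ with $i<j$; if $(a,c)\in D$ then the earlier occurrence of $a$ (in $U$) lies below that of $c$, again forcing $c$'s occurrence into $U$, a contradiction; thus $(a,c)\notin D$, and as $D$ is reflexive this also gives $a\neq c$, i.e.\ $a\parallel c$.

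The main obstacle is the forward direction: as the paragraph preceding the lemma warns, the factorisation $[W]=x\cdot y$ need not respect the block structure of $W$ at all, so one cannot simply cut the word $W$. The poset/order-ideal description is what makes the choice of the sets $B_i$ canonical and lets the single property ``$U=V\setminus P$ is up-closed'' discharge both (a) and (b). If one prefers to avoid even the poset language, the same facts can be obtained by induction on $n$ (or on $|y|$), building $y$ up one Foata layer at a time; but the event-based argument above is cleaner, and I would present it in that form.
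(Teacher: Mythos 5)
Your proof is correct, and in the easy direction it coincides with the paper's: for $(2)\Rightarrow(1)$ the paper turns condition (b) into the commutations $[A_j\setminus B_j]\cdot[B_i]=[B_i]\cdot[A_j\setminus B_j]$ and absorbs the blocks $B_1,B_2,\dots$ one after the other by induction, which is exactly your block-sorting argument phrased at the level of traces instead of individual letter transpositions. The interesting direction $(1)\Rightarrow(2)$, however, is where you diverge. The paper applies Levi's Lemma for traces (cited from \cite[p.~74]{DieR95}) to the factorisation $[A_1]\cdot[A_2]\cdots[A_n]=x\cdot y$, obtaining traces $x_i,y_i$ with $[A_i]=x_i\cdot y_i$, $y_i\parallel x_j$ for $i<j$, $x=x_1\cdots x_n$ and $y=y_1\cdots y_n$; setting $B_i$ with $x_i=[A_i\setminus B_i]$, $y_i=[B_i]$, conditions (b) and (c) are then immediate, and only (a) needs the eFNF-based contradiction argument (which is structurally identical to yours: a dependent witness $a$ cannot lie in $B_i$). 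You instead work one level lower, with the labelled-poset representation of $[W]$ and the correspondence between factorisations and down-closed sets of occurrences. This makes the choice of the $B_i$ canonical and gives (c) essentially for free, but you then have to derive both (a) and (b) from up-closedness of $U$, where the paper gets (b) for free. The trade-off is clear: the paper's route is shorter because Levi's Lemma packages exactly the pairwise-independence data needed, while your route unfolds the machinery that underlies Levi's Lemma ($n$-fold Levi is usually proved by precisely your order-ideal argument), so it is more self-contained in spirit but asks the reader to accept as standard the facts that linearisations of the induced subposet on a down-closed set are exactly the representatives of the corresponding prefix, and symmetrically for the complement. Both arguments are sound; in the context of this paper, which already cites \cite{DieR95}, the Levi-based proof is the more economical choice, but your version would be preferable in a presentation that develops traces via dependence graphs.
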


Note that, by (a) and (c), the Foata normal form of $x$ is a prefix of
the word
$(A_1\setminus B_1)\,(A_2\setminus B_2)\,\cdots\,(A_n\setminus
B_n)$. In contrast, the Foata normal form of $y$ can be very different
from the word $B_1\,B_2\,\cdots\,B_n$: let
$\Sigma=\{a_i,b_i\mid i\in[3]\}$ and $D$ the reflexive and transitive
closure of
\[
  \{(a_i,a_j)\mid i,j\in[3]\}\cup\{(a_i,b_j)\mid 1\le i<j\le 3\}\,.
\]
Then $A_i=\{a_i,b_i\}$ is an element of $\cF$ and the word
$W=A_1A_2A_3$ is in Foata normal form since
$a_i\in D(a_{i+1})\cap D(b_{i+1})$ for all $i\in[2]$. With
$B_i=\{b_i\}$ for all $i\in[3]$, properties (a) and (b) hold. Note
that the Foata normal form of the trace $y=[B_1B_2B_3]$ equals
$\{b_1,b_2,b_3\}$ since these three letters are mutually independent.

\newenvironment{proofof}
  {\trivlist\PRstyle\item[]{\bfseries Proof of Lemma~\ref{L-fnf-product}:}\newline}{\QED\endtrivlist}

\begin{proofof}
  First, assume (2) holds.  From (b), we obtain
  $[A_j\setminus B_j]\cdot[B_1]=[B_1]\cdot[A_j\setminus B_j]$ 
  for all $j\in\{2,\dots,n\}$. Hence
  \begin{align*}
    x\cdot [B_1] &=[A_1\setminus B_1]\cdots[A_n\setminus B_n]
                   \cdot [B_1]\\
                 &=[A_1\setminus B_1]\cdot[B_1]\cdot[A_2\setminus B_2]\cdots[A_n\setminus B_n]\\
                 &=[A_1]\cdot[A_2\setminus B_2]\cdots[A_n\setminus B_n]\,.
  \end{align*}
  Similarly,
  $[A_j\setminus B_j]\cdot[B_2]=[B_2]\cdot[A_j\setminus B_j]$ holds
  for all $j\in\{3,\dots,n\}$ implying
  \begin{align*}
    x\cdot[B_1]\cdot[B_2] &=[A_1]\cdot[A_2]\cdot[A_3\setminus B_3]\cdots[A_n\setminus B_n]\,.
  \end{align*}
  Inductively, we get
  \begin{align*}
    x\cdot y &= [A_1\setminus B_1]\cdots[A_n\setminus B_n]
               \cdot[B_1]\cdots[B_n]\\
    &= [A_1]\cdot[A_2]\cdots [A_n]=[W]
  \end{align*}
  and therefore (1).

  Conversely, suppose (1) holds.  Levi's Lemma
  (cf.~\cite[Prop.~3.2.3]{DieR95}) explicitly says that there are
  traces $x_i$ and $y_i$ for all $i\in[n]$ such that
  \begin{itemize}
  \item $[A_i]=x_i\cdot y_i$ for all $i\in[n]$,
  \item $y_i\parallel x_j$ for all $1\le i<j\le n$,
  \item $x=x_1\cdot x_2\cdots x_n$, and
  \item $y=y_1\cdot y_2\cdots y_n$.
  \end{itemize}
  Since, for $i\in[n]$, $A_i$ is an independent set with
  $[A_i]=x_i\cdot y_i$, there is $B_i\subseteq A_i$ with
  $x_i=[A_i\setminus B_i]$ and $y_i=[B_i]$.

  We first show (a), i.e.,
  $D(A_i\setminus B_i)\supseteq A_{i+1}\setminus B_{i+1}$ for all
  $i\in[n-1]$.

  So let $i\in[n-1]$ and
  $b\in A_{i+1}\setminus B_{i+1}\subseteq A_{i+1}$. Since the word
  $A_1\,A_2\,\cdots\,A_n$ is in extended Foata normal form, there
  exists $a\in A_i$ with $(a,b)\in D$. Suppose $a\in B_i$. Then we
  have $a\in B_i=\alphabet(y_i)$ and
  $b\in A_{i+1}\setminus B_{i+1}=\alphabet(x_{i+1})$. From
  $y_i\parallel x_{i+1}$, we obtain $(a,b)\in I$, which contradicts
  $(a,b)\in D$. Thus, indeed, (a) holds.

  To verify claim (b), let $1\le i<j\le n$. Then
  $B_i =\alphabet(y_i) \parallel \alphabet(x_j)= A_j\setminus B_j$.

  Claim (c) is obvious by
  \begin{align*}
    x &= x_1\cdot x_2\cdots x_n\\
      &= [A_1\setminus B_1]\cdot[A_2\setminus B_2]\cdots[A_n\setminus B_n]
      &&\text{since } x_i=[A_i\setminus B_i]\text{ for all }i\in[n]
         \intertext{ and }
    y &= y_1\cdot y_2\cdots y_n\\
      &=[B_1]\cdot[B_2]\cdots[B_n]
      &&\text{since }y_i=[B_i]\text{ for all }i\in[n]\,.
  \end{align*}
\end{proofof}

We next define what we mean by ``fnf-automatic relations''.

Let $\cR\subseteq\bM^k$ be a $k$-ary relation on the set of traces
$\bM$. From this relation, we first construct a language $L_\cR$ over
the alphabet $\cF^k$. Note that the elements of $\cF^k$ are $k$-tuples
of independent sets that we write interchangeably as
\[
  (A_1,\dots,A_k)\text{ or }
  \begin{pmatrix}
    A_1\\A_2\\\vdots\\A_k
  \end{pmatrix}\,.
\]
Now consider a word
\[
  \begin{pmatrix}
    A_1^1\\A_2^1\\\vdots\\A_k^1
  \end{pmatrix}
  \begin{pmatrix}
    A_1^2\\A_2^2\\\vdots\\ A_k^2
  \end{pmatrix}
  \cdots
  \begin{pmatrix}
    A_1^n\\A_2^n\\\vdots \\A_k^n
  \end{pmatrix}
  \in(\cF^k)^*
\]
over $\cF^k$. It belongs to the language $L_\cR$ if
\begin{itemize}
\item $W_i=A_i^1\,A_i^2\,\cdots\,A_i^n$ is in extended Foata normal form
  for all $i\in[k]$ and
\item the tuple $([W_1],[W_2],\dots,[W_k])$ belongs to $\cR$.
\end{itemize}
Intuitively, we understand a word $W$ over $\cF^k$ as the tuple of words
$(W_1,\dots,W_k)$ (of the same length) over $\cF$. Then $W$ belongs to
$L_\cR$ if all the words $W_i$ are in extended Foata normal form and
the tuple $(x_1,\dots,x_k)$ of traces represented by these words
belongs to the relation $\cR$.

As an example, suppose $A,B\in\cF$ and $\cR=\{([A][A],[B])\}$. Then $L_\cR$
is the language
\[
  \begin{pmatrix}
    A\\B
  \end{pmatrix}
  \begin{pmatrix}
    A\\\emptyset
  \end{pmatrix}
  \begin{pmatrix}
    \emptyset\\\emptyset
  \end{pmatrix}^*\,.
\]

\begin{definition}\label{D-fnf-automatic}
  A relation $\cR\subseteq\bM^k$ is \emph{fnf-automatic} if the
  language $L_\cR\subseteq(\cF^k)^*$ is regular.
\end{definition}

By the very definition, any fnf-automatic relation $\cR$ can be
represented by an NFA $\cA$ for the language $L_\cR$; we will always
assume that fnf-automatic relations are presented that way.

Since the set of all words in extended Foata normal form is regular,
the identity relation $\Id_\bM=\{(x,x)\mid x\in\bM\}$ is
fnf-automatic. It is not difficult to see that the union and the
intersection of two fnf-automatic relations is fnf-automatic. The same
applies to the complement as well as the inversion
$\cR^{-1}=\{(x_k,x_{k-1},\dots,x_1)\mid (x_1,x_2,\dots,x_k)\in\cR\}$.
The following lemma shows that the class of fnf-automatic relations is
closed under composition, the proof is simple, but not obvious.

\begin{lemma}
  If $\cR_1$ and $\cR_2$ are fnf-automatic relations, then
  $\cR_1\circ\cR_2$ is fnf-automatic.
\end{lemma}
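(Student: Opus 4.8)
The plan is to combine NFAs for $L_{\cR_1}$ and $L_{\cR_2}$ by a product construction in which the middle argument of the composition is guessed block by block. Write $\cR_1,\cR_2\subseteq\bM^2$ and let $\cA_i=(Q_i,\cF^2,I_i,T_i,F_i)$ be an NFA with $L(\cA_i)=L_{\cR_i}$ for $i\in\{1,2\}$. I would build an $\varepsilon$-NFA $\cB$ over $\cF^2$ with state set $Q_1\times Q_2$, initial states $I_1\times I_2$, and final states $F_1\times F_2$. On reading a column $\binom{A}{C}\in\cF^2$ (where $A$ is a block of the first argument $x$ and $C$ a block of the third argument $z$), $\cB$ guesses a block $B\in\cF$ of a witnessing middle trace $y$ and moves $(s_1,s_2)\to(s_1',s_2')$ whenever $s_1\xrightarrow{\binom{A}{B}}s_1'$ in $\cA_1$ and $s_2\xrightarrow{\binom{B}{C}}s_2'$ in $\cA_2$. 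Thus a run of $\cB$ on $V=\binom{A_1}{C_1}\cdots\binom{A_n}{C_n}$ simulates $\cA_1$ on $\binom{A_1\cdots A_n}{B_1\cdots B_n}$ and $\cA_2$ on $\binom{B_1\cdots B_n}{C_1\cdots C_n}$, so acceptance forces $([A_1\cdots A_n],[B_1\cdots B_n])\in\cR_1$ and $([B_1\cdots B_n],[C_1\cdots C_n])\in\cR_2$, with all three tracks in extended Foata normal form.

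The step that needs care, and which I expect to be the main obstacle, is a length mismatch. The length $n$ of an accepted word $V$ is at least the lengths of $\fnf(x)$ and $\fnf(z)$ (since the $x$- and $z$-tracks, being extended Foata normal forms, are of the form $\fnf(\cdot)\{\emptyset\}^*$), but a witnessing middle trace $y$ may have $\fnf(y)$ strictly longer than $n$; then there is no room to guess all of $\fnf(y)$ while reading $V$. To repair this I add to $\cB$ the $\varepsilon$-transitions $(s_1,s_2)\xrightarrow{\varepsilon}(s_1',s_2')$ whenever $s_1\xrightarrow{\binom{\emptyset}{B}}s_1'$ in $\cA_1$ and $s_2\xrightarrow{\binom{B}{\emptyset}}s_2'$ in $\cA_2$ for some $B\in\cF$. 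Taken after the input $V$ is exhausted, these transitions continue to guess the remaining blocks of $\fnf(y)$ while feeding $\emptyset$ to the $x$-track of $\cA_1$ and to the $z$-track of $\cA_2$. Here I use two facts established above: every word in extended Foata normal form representing $y$ equals $\fnf(y)\{\emptyset\}^*$, and appending an all-$\emptyset$ column to a word of $L_{\cR_i}$ keeps it in $L_{\cR_i}$ (trailing $\emptyset$'s preserve the extended Foata normal form and do not change the represented traces). The former guarantees that the middle track guessed for $\cA_1$ and the one read by $\cA_2$ may be taken to be the same word $\fnf(y)\{\emptyset\}^*$; the latter guarantees that padding the $x$- and $z$-tracks with $\emptyset$ during the $\varepsilon$-phase is harmless.

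It then remains to check $L(\cB)=L_{\cR_1\circ\cR_2}$. For ``$\subseteq$'', an accepting run simulates $\cA_1$ on $\binom{A_1\cdots A_n\emptyset^r}{B_1\cdots B_{n+r}}$ and $\cA_2$ on $\binom{B_1\cdots B_{n+r}}{C_1\cdots C_n\emptyset^r}$ for some $r\ge0$; since $[A_1\cdots A_n\emptyset^r]=[A_1\cdots A_n]=x$ and $[C_1\cdots C_n\emptyset^r]=[C_1\cdots C_n]=z$, acceptance yields $(x,y)\in\cR_1$ and $(y,z)\in\cR_2$ for $y=[B_1\cdots B_{n+r}]$, so $V\in L_{\cR_1\circ\cR_2}$. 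For ``$\supseteq$'', given $V\in L_{\cR_1\circ\cR_2}$ pick a witness $y$, set $m=\max(n,|\fnf(y)|)$, and guess the blocks of the length-$m$ extended Foata normal form $\fnf(y)\,\{\emptyset\}^{m-|\fnf(y)|}$ of $y$ — the first $n$ during the reading of $V$ and the remaining $m-n$ during the $\varepsilon$-phase — while padding the $x$- and $z$-tracks with $\emptyset$'s beyond position $n$; this produces an accepting run. Finally, since $\varepsilon$-NFAs can be converted into equivalent NFAs, $L_{\cR_1\circ\cR_2}$ is regular, i.e.\ $\cR_1\circ\cR_2$ is fnf-automatic.
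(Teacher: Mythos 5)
Your proof is correct, and it is essentially the paper's proof carried out at the level of an explicit automaton construction: the paper forms the language $L=\pi_{13}\bigl(\pi_{12}^{-1}(L_{\cR_1})\cap\pi_{23}^{-1}(L_{\cR_2})\bigr)$, which is precisely what your product automaton with a guessed middle block recognizes, and it disposes of the length mismatch by taking the right quotient of $L$ by $\pair{\emptyset}{\emptyset}^*$, which is precisely what your $\varepsilon$-phase implements. So both arguments rest on the same two ideas: synchronous guessing of the middle track, and padding by $\emptyset$-columns because $\fnf(y)$ may be longer than the input word.

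One imprecision should be fixed, although it does not break the construction. In an $\varepsilon$-NFA, nothing forces your $\varepsilon$-moves to fire only after the input is exhausted; an accepting run of $\mathcal{B}$ on $V$ may interleave them with the letter-reading moves, so the word fed to $\cA_1$ is in general $A_1\cdots A_n$ with $\emptyset$'s inserted at arbitrary positions, not necessarily of the form $A_1\cdots A_n\,\emptyset^r$ as your ``$\subseteq$'' direction assumes (and similarly for $\cA_2$). The repair is short: $\cA_1$ accepts only words both of whose tracks are in extended Foata normal form, and in such a word the letter $\emptyset$ can occur only as a suffix; hence an $\varepsilon$-move taken mid-run forces all later first-track letters to be $\emptyset$, and since inserting or deleting $\emptyset$'s changes neither the represented traces nor membership of the remaining tracks in extended Foata normal form, every accepting run --- interleaved or not --- still witnesses $V\in L_{\cR_1\circ\cR_2}$. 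The paper's right-quotient formulation avoids this case distinction altogether, since there the padding columns are appended at the end by definition.
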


\begin{proof}
  For notational convenience, we assume $\cR_1$ and $\cR_2$ to be
  binary. For $1\le i<j\le 3$, let
  $\pi_{ij}\colon(\cF^3)^*\to(\cF^2)^*$ be the monoid homomorphism with
  \[
    \begin{pmatrix}
      A_1\\A_2\\A_3
    \end{pmatrix}
    \mapsto\pair{A_i}{A_j}
  \]
  for all $A_1,A_2,A_3\in\cF$. We consider the language
  \[
    L=\pi_{13}\bigl(\pi_{12}^{-1}(L_{\cR_1})\cap\pi_{23}^{-1}(L_{\cR_2})\bigr)\,.
  \]

  For any words $U=A_1\,A_2\,\cdots\,A_n$ and
  $W=C_1\,C_2\,\cdots\,C_n$ in extended Foata normal form, we have
  \[
    \pair{A_1}{C_1}\,\pair{A_2}{C_2}\,\cdots\,\pair{A_n}{C_n}\in
    L_{\cR_1\circ\cR_2}
  \]
  if, and only if, there exists $m\in\bN$ with
  \[
    \pair{A_1}{C_1}\,\pair{A_2}{C_2}\,\cdots\,\pair{A_n}{C_n}\,
    \pair{\emptyset}{\emptyset}^m\in L\,.
  \]
  Hence $L_{\cR_1\circ\cR_2}$ is the right quotient of the language
  $L$ wrt.\ the regular language
  $\pair{\emptyset}{\emptyset}^*$. Since the class of regular
  languages is closed under homomorphic images and inverse images and
  under right quotions wrt.\ regular languages, the language
  $L_{\cR_1\circ\cR_2}$ is regular.  
\end{proof}

The following lemma relates the class of fnf-automatic relations to
that of rational relations, i.e., relations that are constructed from
finite relations using (componentwise) concatenation, Kleene star, and
union (see \cite{Ber79} for a comprehensive study of this class of
relations).

\begin{lemma}\label{L-fnf-vs-rational}
  Any fnf-automatic relation is rational, but there are rational unary
  relations (i.e., rational languages) that are not fnf-automatic.
\end{lemma}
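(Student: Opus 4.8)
The plan is to prove the two halves of the statement separately.

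For the first claim, that every fnf-automatic relation is rational, I would proceed directly from the definitions. Suppose $\cR\subseteq\bM^k$ is fnf-automatic, so by definition the language $L_\cR\subseteq(\cF^k)^*$ is regular. The goal is to produce a regular word language $L\subseteq(\Sigma^k)^*$ (or, more precisely, a regular word language over a suitable product alphabet) whose image under the coordinatewise $[.]$-map is exactly $\cR$. The essential observation is that an independent set $A\in\cF$ is itself a trace $[A]\in\bM$, and reading a word $A_1\,A_2\,\cdots\,A_n\in\cF^*$ coordinatewise and applying $[.]$ just multiplies these traces: $[A_1\,A_2\,\cdots\,A_n]=[A_1]\cdot[A_2]\cdots[A_n]$. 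So I would fix, for each independent set $A\in\cF$, a representative word $r_A\in\Sigma^*$ with $[r_A]=[A]$ (for instance, list its letters in some fixed order), and extend $r$ to a homomorphism $r\colon\cF^*\to\Sigma^*$. Applying this homomorphism coordinatewise to each of the $k$ tracks of $L_\cR$ gives a regular language of $k$-tuples of words, i.e. a rational relation over the free monoid, and composing with the $[.]$-map recovers $\cR$ exactly. Since regular languages are closed under homomorphic image and rational trace relations are by definition images of rational word relations, this shows $\cR$ is rational. The one point to check carefully is that every tuple in $\cR$ is actually hit: this uses the fact established in the excerpt that the extended-Foata-normal-form representatives of a trace are exactly $\fnf(x)\,\{\emptyset\}^*$, so every tuple of traces in $\cR$ has at least one representative word in $L_\cR$.

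For the second claim, that some rational language fails to be fnf-automatic, the natural strategy is to exhibit a concrete counterexample and apply a pumping argument. I would take a dependence alphabet with two independent letters, say $\Sigma=\{a,b\}$ with $a\parallel b$, so that $\bM\cong(\bN^2,+)$, and consider the unary relation (language) $\cR=\{[(ab)^n]\mid n\in\bN\}=\{[a^nb^n]\mid n\in\bN\}$, which the excerpt already flags as rational but not recognizable. The key is to understand what the Foata normal form of $[a^nb^n]$ looks like: since $a\parallel b$, the trace $[a^nb^n]$ has Foata normal form $\{a,b\}^n$, that is, the single letter $\{a,b\}\in\cF$ repeated $n$ times. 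Hence $L_\cR$ consists of exactly the words $\{a,b\}^n$ over the alphabet $\cF$ (together with trailing $\emptyset$-padding, reflecting extended Foata normal forms), i.e. $L_\cR=\{\{a,b\}\}^*\{\emptyset\}^*$ — but crucially restricted so that the \emph{number} of $\{a,b\}$-blocks is unconstrained, which is regular. So for this particular $\cR$, $L_\cR$ is regular and the example does \emph{not} separate the classes.

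This tells me the counterexample must be more subtle: fnf-automaticity is about the single track's Foata normal form, and for a one-dimensional trace language $\cR$ fnf-automaticity coincides exactly with recognizability of the word language $\{u\mid[u]\in\cR\}$. Indeed, for $k=1$ the language $L_\cR$ is just $\fnf(\cR)\,\{\emptyset\}^*$, and $\fnf$ is a bijection onto Foata normal forms, so $L_\cR$ regular is equivalent to $\{u\in\Sigma^*\mid[u]\in\cR\}$ regular, which is the definition of recognizability. Therefore the separating example I actually want is any language that is rational but \emph{not} recognizable, and the excerpt hands me one: with $a\parallel b$, take $\cR=\{[ab]^n\mid n\in\bN\}$. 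The main obstacle, and the crux of the argument, is establishing this equivalence ``fnf-automatic $=$ recognizable'' for unary relations: I would argue that $\fnf$ and its inverse are recognizability-preserving in both directions, so that $L_\cR$ regular forces $\{u\mid[u]\in\cR\}$ regular, contradicting non-recognizability of $\{[ab]^n\}$. Concretely, I would show that $L_\cR$ regular implies $\fnf(\cR)$ regular (strip the padding by right quotient with $\{\emptyset\}^*$), and that the map sending a Foata normal form to the set of all words inducing that trace is an inverse-image-type operation preserving regularity; the non-recognizability of $\{[ab]^n\}$ then yields the desired contradiction.
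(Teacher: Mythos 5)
Your argument for the first claim is correct and is essentially the paper's proof: $\cR$ is the image of the regular language $L_\cR$ under the monoid homomorphism $(\cF^k)^*\to\bM^k$ sending $(A_1,\dots,A_k)$ to $([A_1],\dots,[A_k])$, and homomorphic images of rational sets are rational; your detour through representative words $r_A$ changes nothing essential.

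Your second claim contains a genuine gap. You correctly compute that $\cR=\{[ab]^n\mid n\in\bN\}$ with $a\parallel b$ has $L_\cR=\{\{a,b\}\}^*\{\emptyset\}^*$ regular, so this $\cR$ \emph{is} fnf-automatic and cannot separate the classes --- but you then assert that unary fnf-automaticity coincides with recognizability and, on that basis, return to this very same language as your counterexample, contradicting your own computation. The asserted equivalence fails in exactly the direction you need: $L_\cR$ regular means $\fnf(\cR)$ is regular as a language over the alphabet $\cF$, but this does not imply that $\{u\in\Sigma^*\mid[u]\in\cR\}$ is regular over $\Sigma$, because passing from a set of Foata normal forms to the set of all their linearizations does not preserve regularity: the linearizations of $\{\{a,b\}\}^*$ form the non-regular language of words with equally many occurrences of $a$ and $b$. (The paper's Corollary~\ref{C-recognizable->fnf-automatic} makes precisely this point: there are fnf-automatic languages that are not recognizable, witnessed by your $\cR$.) What is needed instead is a rational trace language whose set of Foata normal forms is itself non-regular, and the paper's choice is $\cL=[aab]^*$: since $a\parallel b$, the trace $[(aab)^n]$ has Foata normal form $\{a,b\}^n\,\{a\}^n$ (the first $n$ layers each absorb one $b$ and one $a$, after which the remaining $n$ occurrences of $a$ can only form singleton layers, as $D(\{a\})\not\supseteq\{a,b\}$), so
\[
  L_\cL=\bigl\{\{a,b\}^n\,\{a\}^n\,\emptyset^m \mid m,n\in\bN\bigr\}
\]
is not regular by a pumping argument, and hence $\cL$ is rational but not fnf-automatic. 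The unequal multiplicities $|w|_a=2|w|_b$ are what force two phases of matched length in the normal form; your symmetric example collapses to a single phase and stays regular.
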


\begin{proof}
  For the first claim, let $\cR\subseteq\bM^k$ be fnf-automatic. Then
  the language $L_\cR$ is regular and $\cR$ is the image of $L_\cR$
  under the monoid homomorphism
  \[
    (\cF^k)^*\to\bM^k\colon
    \begin{pmatrix}
      A_1\\A_2\\\vdots\\A_k
    \end{pmatrix}
    \mapsto
    \bigl([A_1],[A_2],\dots,[A_k]\bigr)\,.
  \]
  Since homomorphic images of rational sets are rational, the relation
  $\cR$ is rational.

  For the second claim, let $a,b\in \Sigma$ with $a\parallel b$ and
  consider the rational set of traces
  $\cL=[aab]^*=\{[aab]^n\mid n\in\bN\}$. Then $L_\cL$ is the set of
  words $A^n\,B^n\,C^m$ with $A=\{a,b\}$, $B=\{a\}$, $C=\emptyset$,
  and $m,n\in\bN$. Since this language is not regular, the unary
  relation $\cL$ is not fnf-automatic.
\end{proof}

In general, the concatenation of two fnf-automatic relations is not
fnf-automatic (consider, e.g., the unary relations $\cL_1=[a]^*$ and
$\cL_2=[bc]^*$ with $(a,b)\in D$ and $\{a,b\}\parallel\{c\}$ since
$[(ac)^m\,b^n]\in\cL_1\cdot\cL_2$ iff $m=n$). The following lemma
proves that certain concatenations are fnf-automatic.

\begin{lemma}\label{L-fnf-sr}
  Let $\cR\subseteq\bM^2$ be fnf-automatic and $\cK,\cL\subseteq\bM$
  recognizable. Then
  \[
    \cR\cdot(\cK\times\cL)=\{(x_1y,x_2z)\mid (x_1,x_2)\in\cR, y\in\cK, z\in\cL\}
  \]
  is effectively fnf-automatic.
\end{lemma}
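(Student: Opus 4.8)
The plan is to build a nondeterministic automaton for $L_{\cR\cdot(\cK\times\cL)}$ that reads a word $\pair{A_1}{C_1}\cdots\pair{A_n}{C_n}\in(\cF^2)^*$ representing a pair of traces $(s,t)$ and, along the way, \emph{guesses} the factorisations $s=x_1\cdot y$ and $t=x_2\cdot z$ that witness membership in $\cR\cdot(\cK\times\cL)$. The key point making this possible is Lemma~\ref{L-fnf-product}: a factorisation $s=x_1\cdot y$ of $s=[A_1\cdots A_n]$ is encoded position by position as a choice of sets $B_i\subseteq A_i$ subject to conditions (a)--(c) of that lemma, and likewise $t=x_2\cdot z$ corresponds to sets $B_i'\subseteq C_i$. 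I would therefore work over the extended alphabet $\Gamma$ of all quadruples $(A,B,C,B')$ with $B\subseteq A\in\cF$ and $B'\subseteq C\in\cF$, together with the two projection morphisms $\pi,\rho\colon\Gamma^*\to(\cF^2)^*$ given on letters by $\pi(A,B,C,B')=\pair{A}{C}$ and $\rho(A,B,C,B')=\pair{A\setminus B}{C\setminus B'}$.

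The plan is to define $L'\subseteq\Gamma^*$ as the set of words $(A_1,B_1,C_1,B_1')\cdots(A_n,B_n,C_n,B_n')$ satisfying:
\begin{itemize}
\item $A_1\cdots A_n$ and $C_1\cdots C_n$ are in extended Foata normal form;
\item $\rho$ maps the word into $L_\cR$ (this simultaneously enforces condition (a) of Lemma~\ref{L-fnf-product} for both components and records that $(x_1,x_2):=\bigl([(A_i\setminus B_i)_i],[(C_i\setminus B_i')_i]\bigr)\in\cR$);
\item condition (b) of Lemma~\ref{L-fnf-product} holds for the $B_i$ and, separately, for the $B_i'$;
\item $[B_1\cdots B_n]\in\cK$ and $[B_1'\cdots B_n']\in\cL$.
\end{itemize}
Using both directions of Lemma~\ref{L-fnf-product}, membership of a word in $L'$ is then equivalent to $\pi$ of that word representing a pair $(s,t)$ with $s=x_1\cdot[B_1\cdots B_n]$, $t=x_2\cdot[B_1'\cdots B_n']$, $(x_1,x_2)\in\cR$, $[B_1\cdots B_n]\in\cK$, and $[B_1'\cdots B_n']\in\cL$; that is, $\pi(L')=L_{\cR\cdot(\cK\times\cL)}$. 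Since regular languages are closed under homomorphic images, it suffices to prove that $L'$ is regular.

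Regularity of $L'$ I would establish by exhibiting it as a finite intersection of regular languages. The extended-Foata-normal-form conditions on $A_1\cdots A_n$ and $C_1\cdots C_n$ are constraints on consecutive letters, hence regular. The condition ``$\rho$ of the word lies in $L_\cR$'' is $\rho^{-1}(L_\cR)$, regular as an inverse morphic image of the regular language $L_\cR$. Condition (b), namely $B_i\parallel(A_j\setminus B_j)$ for all $i<j$, is checked by maintaining in the state the finite set $\bigcup_{i<j}B_i\subseteq\Sigma$ and verifying at each step that $A_j\setminus B_j$ is independent from it; this is again regular, the state space being finite since $\Sigma$ is.

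The crux of the argument — and the precise place where \emph{recognizability} of $\cK$ and $\cL$ (rather than mere rationality) is used — is the test $[B_1\cdots B_n]\in\cK$. The word $B_1B_2\cdots B_n$ over $\cF$ need not be in Foata normal form, so one cannot decide this membership with an fnf-automaton. Instead I would fix a morphism $\eta\colon\bM\to M$ onto a finite monoid recognizing $\cK$ and have the automaton carry the running product $\eta([B_1])\cdots\eta([B_j])\in M$ in its state, accepting when the final value lands in $\eta(\cK)$; an analogous finite monoid handles $\cL$. This works precisely because the (generally non-canonical) ordering of the $B_i$ is invisible to $\eta$. Intersecting all of the above regular conditions shows $L'$ is regular, and since every construction is effective, $\cR\cdot(\cK\times\cL)$ is effectively fnf-automatic. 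I expect this last step to be the main obstacle, as it is exactly the reason the lemma requires recognizable, not merely rational, side factors.
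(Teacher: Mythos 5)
Your proof is correct, and its mathematical core coincides with the paper's: both guess, position by position, the subsets $B_i\subseteq A_i$ (and in your case also $B_i'\subseteq C_i$) that encode the factorisations via Lemma~\ref{L-fnf-product}, both check the extended-Foata and independence conditions with finite memory, and both use recognizability of $\cK$ and $\cL$ exactly where you say it is needed, namely to make the membership test for $[B_1\cdots B_n]$ finite-state. The packaging differs in three ways. (1)~The paper first reduces to the case $\cL=\{1\}$ via the identity $\cR\cdot(\cK\times\cL)=\bigl(\cR\cdot(\cK\times\{1\})\bigr)\circ\bigl(\Id_\bM\cdot(\cL\times\{1\})\bigr)^{-1}$, invoking closure of fnf-automatic relations under composition and inversion; you treat both components symmetrically, which makes your argument self-contained apart from Lemma~\ref{L-fnf-product} and standard closure properties of regular word languages. (2)~The paper constructs an explicit product NFA with state space $Q_1\times Q_2\times 2^\Sigma\times 2^\Sigma$ and verifies it by an induction on runs; you move the guesses into the extended alphabet $\Gamma$ and obtain $L_{\cR\cdot(\cK\times\cL)}$ as $\pi(L')$ for an intersection $L'$ of individually regular constraints --- essentially the same automaton in transducer clothing, but with the correctness argument collapsed to one application of each direction of Lemma~\ref{L-fnf-product} instead of a run induction. (3)~For the test $[B_1\cdots B_n]\in\cK$ the paper simulates an NFA for the word language $\{w\in\Sigma^*\mid [w]\in\cK\}$ on arbitrary linearizations $w_i\in[B_i]$, while you multiply out a recognizing morphism $\eta\colon\bM\to M$; both work, but you should note that obtaining $\eta$ effectively from the paper's definition of recognizability (regularity of the $\sim$-closed word language) uses the standard fact that the syntactic morphism of that word language factors through $\bM$ --- alternatively, you could avoid monoids altogether and feed linearizations into a DFA for that word language, acceptance being independent of the chosen linearization precisely because the language is $\sim$-closed.
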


\begin{proof}
  Note that
  \[
    \cR\cdot(\cK\times\cL)=\bigl(\cR\cdot(\cK\times\{1\})\bigr)
    \circ
    \bigl(\Id_\bM\cdot(\cL\times\{1\})\bigr)^{-1}\,.
  \]
  Since the class of fnf-automatic relations is effectively closed
  under composition and inversion and since $\Id_\bM$ is
  fnf-automatic, it suffices to prove the lemma for $\cL=\{1\}$.

  The idea of the construction of an automaton for
  $\cR\cdot(\cK\times\{1\})$ is as follows. It reads a word
  \[
    \pair{A_1}{C_1}
    \pair{A_2}{C_2}
    \cdots
    \pair{A_n}{C_n}
  \]
  with $A_1\,A_2\,\cdots\,A_n$ and $C_1\,C_2\,\cdots\,C_n$ in extended
  Foata normal form. Along this path, it guesses sets
  $B_i\subseteq A_i$ with the following properties:
  \begin{itemize}
  \item The word
    $(A_1\setminus B_1)(A_2\setminus B_2)\cdots(A_n\setminus B_n)$ is
    in extended Foata normal form.

    To ensure this, we recall the last set $A_i\setminus B_i$ and
    enforce that any letter from $A_{i+1}\setminus B_{i+1}$ is
    dependent from some letter in the recalled set.
  \item The word
    \[
      \pair{A_1\setminus B_1}{C_1}
      \pair{A_2\setminus B_2}{C_2}
      \cdots
      \pair{A_n\setminus B_n}{C_n}
    \]
    represents some pair in the relation $\cR$.

    This is realized by running an automaton for $L_\cR$ that consumes
    the pairs $\pair{A_i\setminus B_i}{C_i}$.
  \item The word $B_1\,B_2\,\cdots\,B_n$ represents some trace from
    $\cK$.

    For this, we run an automaton for $\cK$ that consumes the sets
    $B_i$.
  \item The word $A_1\,A_2\,\cdots\,A_n$ represents the product of the
    traces
    $\bigl[(A_1\setminus B_1)\,(A_2\setminus
    B_2)\,\cdots\,(A_n\setminus B_n)\bigr]$ and
    $[B_1\,B_2\,\cdots\,B_n]$.

    This is achieved by recalling the set of letters that have been
    guessed so far (i.e., $\bigcup_{j\in[i]}B_j$) and enforcing that
    they are independent from all letters in $A_{i+1}\setminus B_{i+1}$.
  \end{itemize}

  We now come to the details of this construction. Since $\cR$ is
  fnf-automatic, there is an NFA $\cA_1=(Q_1,\cF^2,I_1,T_1,F_1)$ that
  accepts the language of all words
  \[
    \pair{A'_1}{C_1}
    \pair{A'_2}{C_2}
    \cdots
    \pair{A'_n}{C_n}
  \]
  such that
  \begin{itemize}
  \item $A'_1\,A'_2\,\cdots\,A'_n$ and $C_1\,C_2\,\cdots\,C_n$ are in
    extended Foata normal form and
  \item $\bigl([A'_1\,A'_2\,\cdots\,A'_n],[C_1\,C_2\,\cdots\,C_n]\bigr)\in\cR$.
  \end{itemize}
  Since $\cK\subseteq\bM$ is recognizable, there exists an NFA
  $\cA_2=(Q_2,\Sigma,I_2,T_2,F_2)$ that accepts the language
  \[
    L(\cA_2)=\{w\in\Sigma^*\mid [w]\in\cK\}\,.
  \]

  We now build a new NFA
  $\cA=(Q,\cF^2,I,T,F)$ as follows:
  \begin{itemize}
  \item The set of states equals
    $Q=2^\Sigma\times Q_1\times Q_2\times 2^\Sigma$, i.e., states of
    $\cA$ are quadruples consisting of states of $\cA_1$ and $\cA_2$
    as well as two sets of letters.
  \item The set of initial states equals
    $I=\{\Sigma\}\times I_1\times I_2\times\{\emptyset\}$, i.e., the
    set of quadruples $(\Sigma,\iota_1,\iota_2,\emptyset)$ with
    $\iota_i\in I_i$.
  \item We have
    $\left((X_1,p_1,p_2,X_2),\pair{A}{C},(Y_1,q_1,q_2,Y_2)\right)\in
    T$ if, and only if, there exists $B\subseteq A$ with
    \begin{itemize}
    \item $D(X_1)\supseteq A\setminus B$ and $Y_1=A\setminus B$,

    \item $\left(p_1,\pair{A\setminus B}{C}, q_1\right)\in T_1$,
    \item there is some word $w\in[B]$ with
      $p_2\xrightarrow{w}_2 q_2$, and
    \item $X_2\parallel A\setminus B$ and $Y_2=X_2\cup B$.
    \end{itemize}
  \item The set of accepting states equals
    $F=2^\Sigma\times F_1\times F_2\times 2^\Sigma$, i.e., the set of
    quadruples $(Y_1,f_1,f_2,Y_2)$ with $f_i\in F_i$.
  \end{itemize}

  By induction, one then obtains the following for any $n\in\bN$. Let
  $(\iota_1,\iota_2)\in I_1\times I_2$, $A_1\,A_2\,\cdots\,A_n$ and
  $C_1\,C_2\,\cdots\,C_n$ be words in extended Foata normal form,
  $(q_1,q_2)\in Q_1\times Q_2$, and $Y_1,Y_2\subseteq \Sigma$. Then
  \begin{equation}
    \label{eq:proof-of-}
    (\Sigma,\iota_1,\iota_2,\emptyset)
    \xrightarrow{\pair{A_1}{C_1}\,\cdots\,\pair{A_n}{C_n}}
    (Y_1,q_1,q_2,Y_2)
  \end{equation}
  if, and only if, there are sets $B_i\subseteq A_i$ for all $i\in[n]$
  such that
  \begin{enumerate}[(i)]
  \item
    $(A_1\setminus B_1)\,(A_2\setminus B_2)\,\cdots\,(A_n\setminus
    B_n)$ is in extended Foata normal form and $Y_1=A_n\setminus B_n$,

  \item $\iota_1\xrightarrow{\pair{A_1\setminus B_1}{C_1}\,\pair{A_2\setminus B_2}{C_2}\,\cdots\,\pair{A_n\setminus B_n}{C_n}}_1 q_1$,
  \item for all $i\in[n]$, there are words $w_i\in [B_i]$ such that
    $\iota_2\xrightarrow{w_1\,w_2\,\cdots\,w_n}_2 q_2$, and
  \item $B_i\parallel(A_j\setminus B_j)$ for all $1\le i<j\le n$ and
    $Y_2=\bigcup_{i\in[n]}B_i$.
  \end{enumerate}

  Now let $U=A_1\,A_2\,\cdots\,A_n$ and $V=C_1\,C_2\,\cdots\,C_n$ be
  in extended Foata normal form and let
  \[
    W=\pair{A_1}{C_1}\pair{A_2}{C_2}\cdots\pair{A_n}{C_n}\,.
  \]
  We have to show
  \begin{equation}
    \label{eq:claim}
    W\in L(\cA)\iff ([U],[V])\in\cR\cdot(\cK\times\{1\})\,.
  \end{equation}
  
  First, assume $W\in L(\cA)$. Then there are
  $(\iota_1,\iota_2)\in I_1\times I_2$, $(q_1,q_2)\in F_1\times F_2$,
  and $Y_1,Y_2\subseteq\Sigma$ such that \eqref{eq:proof-of-}
  holds. By the above, there are sets $B_i\subseteq A_i$ for all
  $i\in[n]$ such that (i-iv) hold.

  In the following, set
  \begin{align*}
    x &= \bigl[(A_1\setminus B_1)\,(A_2\setminus B_2)\,\cdots\,
        (A_n\setminus B_n)\bigr]\,, \\
    y &=[B_1\,B_2\,\cdots\,B_n]\,, \text{ and }\\
    z &=[C_1\,C_2\,\cdots\,C_n]=[V]\,.
  \end{align*}
  Since, by (i), the word
  $(A_1\setminus B_1)\cdots(A_n\setminus B_n)$ is in extended Foata
  normal form, (ii) and our assumption on the NFA $\cA_1$ imply
  $(x,z)\in \cR$.

  Property (iii) implies that $w_1w_2\cdots w_n$ is accepted by the NFA
  $\cA_2$. Hence $y=[w_1\cdots w_n]\in\cK$.

  Finally, Lemma~\ref{L-fnf-product} together with (i) and (iv) and
  our definition of $x$ and $y$ yield
  \[
    x\cdot y=[A_1]\cdot[A_2]\cdots[A_n]=[U]\,.
  \]

  Now $(x,z)\in \cR$ and $y\in \cK$ imply
  \[
    ([U],[V])=(x\cdot y,z)\in \cR\cdot(\cK\times\{1\})\,.
  \]
  This finishes the verification of the implication ``$\Rightarrow$''
  in \eqref{eq:claim}.

  Conversely, suppose $([U],[V])\in \cR\cdot(\cK\times\{1\})$. Set
  $z=[V]$ as above. There are traces $x$ and $y$ with $[U]=x\cdot y$, $(x,z)\in \cR$,
  and $y\in\cK$. Since $U$ is in extended Foata normal form with $[U]=x\cdot y$,
  Lemma~\ref{L-fnf-product} implies the existence of sets
  $B_i\subseteq A_i$ for  $i\in[n]$ such that
  \begin{enumerate}[(a)]
  \item the word
    $X:=(A_1\setminus B_1)\,(A_2\setminus B_2)\,\cdots\,(A_n\setminus
    B_n)$ is in extended Foata normal form,
  \item $B_i\parallel A_j\setminus B_j$ for all $1\le i<j\le n$, and
  \item
    $x=\bigl[(A_1\setminus B_1)\,(A_2\setminus B_2)\,\cdots\,(A_n\setminus
    B_n)\bigr]$ and $y=[B_1\,B_2\,\cdots\,B_n]$.
  \end{enumerate}

  The words $X$ and $V$ both are in extended Foata normal form with
  $([X],[V])=(x,z)\in\cR$. Our assumption on the NFA $\cA_1$ implies
  that the word from (ii) belongs to $L(\cA_1)$, i.e., there are
  $(\iota_1,q_1)\in I_1\times F_1$ such that (ii) holds.

  Let $w_i\in[B_i]$ for $i\in[n]$. Then $[w_1w_2\cdots
  w_n]=y\in\cK$. Hence the word $w_1\cdots w_n$ is accepted by the NFA
  $\cA_2$, i.e., there are $(\iota_2,q_2)\in I_2\times F_2$ such that
  (iii) holds.

  Setting $Y_1=A_n\setminus B_n$ and  $Y_2=\bigcup_{i\in[n]}B_i$ 
  ensures that also (i) and (iv) hold. Thus, we have
  \eqref{eq:proof-of-} and therefore $W\in L(\cA)$. But this
  finishes the verification of the implication ``$\Leftarrow$'' in
  \eqref{eq:claim}.
\end{proof}

Lemma~\ref{L-fnf-vs-rational}, relates fnf-automatic and rational
relations. The following corollary describes the relation between
fnf-automatic and recognizable languages.

\begin{corollary}\label{C-recognizable->fnf-automatic}
  Any recognizable language in $\bM$ is fnf-automatic, but there are
  fnf-automatic languages that are not recognizable.
\end{corollary}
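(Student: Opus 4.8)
The statement splits into a positive inclusion and a separating example; I would handle them in turn.

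For the positive direction I would reduce everything to Lemma~\ref{L-fnf-sr}. First note that $\{1\}\subseteq\bM$ is recognizable, since $\{w\in\Sigma^*\mid [w]=1\}=\{\varepsilon\}$ is regular, and that the binary relation $\{(1,1)\}$ is fnf-automatic: the only extended Foata normal form representing $1$ is a power of $\emptyset$, so $L_{\{(1,1)\}}=\pair{\emptyset}{\emptyset}^*$ is regular. Now let $\cK\subseteq\bM$ be recognizable. Applying Lemma~\ref{L-fnf-sr} with $\cR=\{(1,1)\}$ and $\cL=\{1\}$ shows that
\[
  \{(1,1)\}\cdot(\cK\times\{1\})=\{(1\cdot y,\,1\cdot 1)\mid y\in\cK\}=\cK\times\{1\}
\]
is fnf-automatic.

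It then remains to strip the trivial second component. Let $h\colon\cF^*\to(\cF^2)^*$ be the monoid homomorphism with $h(A)=\pair{A}{\emptyset}$ for $A\in\cF$. For $U=A_1\cdots A_n\in\cF^*$ the second track of $h(U)$ is $\emptyset^n$, which is always in extended Foata normal form and represents $1$; hence $h(U)\in L_{\cK\times\{1\}}$ if, and only if, $U$ is in extended Foata normal form with $[U]\in\cK$, i.e.\ $U\in L_\cK$. Thus $L_\cK=h^{-1}(L_{\cK\times\{1\}})$, which is regular because the class of regular languages is closed under inverse homomorphisms; so $\cK$ is fnf-automatic. (Alternatively, one can construct a single automaton over $\cF$ that reads $A_1A_2\cdots A_n$, verifies the condition $A_{i+1}\subseteq D(A_i)$ using its finite memory, and simultaneously tracks the image of $[A_1\cdots A_n]$ in a finite monoid recognizing $\cK$.)

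For the separating example I would follow the style of Lemma~\ref{L-fnf-vs-rational}: take $a,b\in\Sigma$ with $a\parallel b$ and consider $\cL=[ab]^*$. Because $a\parallel b$, we have $[ab]^n=[a^nb^n]$, whose Foata normal form is $A^n$ with $A=\{a,b\}\in\cF$; since the extended Foata normal forms of a trace $x$ are exactly $\fnf(x)\,\{\emptyset\}^*$, the language $L_\cL$ equals $A^*\,\emptyset^*$, which is regular, so $\cL$ is fnf-automatic. On the other hand, $\{w\in\Sigma^*\mid [w]\in\cL\}=\{w\in\{a,b\}^*\mid |w|_a=|w|_b\}$ is not regular, so $\cL$ is not recognizable; this is the desired fnf-automatic, non-recognizable language.

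The two inclusions of the positive direction and the standard closure facts (recognizability of $\{1\}$, closure of regular languages under inverse homomorphism) are routine. The step that needs care — and the only real obstacle — is the correct handling of Foata normal forms: in the positive direction one must check that the padding track $\emptyset^n$ never violates the extended-Foata-normal-form condition (it does not, since $\emptyset\subseteq D(A)$ holds trivially), and in the separating example one must compute $\fnf([ab]^n)=A^n$ and invoke the description of all extended Foata normal forms of a fixed trace as $\fnf(x)\,\{\emptyset\}^*$.
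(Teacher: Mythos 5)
Your proof is correct and takes essentially the same approach as the paper: both apply Lemma~\ref{L-fnf-sr} to the fnf-automatic relation $\{(1,1)\}$ to get $\cK\times\{1\}$ fnf-automatic and then strip the trivial second component, and both use the same separating example $[ab]^*=\{[a^nb^n]\mid n\in\bN\}$ with $a\parallel b$. The only difference is cosmetic: where the paper invokes closure of fnf-automatic relations under projections, you establish that step directly via the inverse image under the homomorphism $A\mapsto\pair{A}{\emptyset}$, a perfectly valid and somewhat more self-contained substitute.
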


\begin{proof}
  Let $\cL\subseteq\bM$ be recognizable.  The relation $\cR=\{(1,1)\}$
  is fnf-automatic. Hence, by Lemma~\ref{L-fnf-sr}, the relation
  $\cR\cdot\bigl(\cL\times\{1\}\bigr)=\{(u,1)\mid u\in\cL\}$ is
  fnf-automatic. Since the class of fnf-automatic relations is closed
  under projections, the language $\cL$ is fnf-automatic.
  
  For the second claim, let $a,b\in\Sigma$ with $a\parallel b$ and
  $A=\{a,b\}$. Then $\cL=\{[A^n]\mid n\in\bN\}$ is fnf-automatic, but
  it is not recognizable since $\cL=\{[a^nb^n]\mid n\in\bN\}$.
\end{proof}

\section{Reachability and fnf-automaticity}

The aim of this section is the proof of the following result.

\begin{theorem}\label{T-reachability-fnf-automatic}
  Let $\cP=(Q,\Sigma,D,\Delta)$ be a loop-connected tPDS and
  $p,q\in Q$ two states. Then the one-step relation $\step_{p,q}$ and the
  reachability relation $\reach_{p,q}$ are effectively fnf-automatic.
\end{theorem}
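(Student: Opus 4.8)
The plan is to establish fnf-automaticity of $\step_{p,q}$ first (this should be routine) and then to build $\reach_{p,q}$ from a handful of simpler fnf-automatic relations using the closure properties established earlier: union, composition, inversion, and the product lemma (Lemma~\ref{L-fnf-sr}). Following the paragraph in the introduction that sketches the argument, I would decompose the reachability relation according to the two special cases of \emph{deleting} systems (no transition writes anything, so the pushdown only shrinks) and \emph{non-deleting} systems (the pushdown never shrinks), and then recombine. The key conceptual point, highlighted by the authors, is that the general reachability relation is a finite union of compositions of these two special cases; since the class of fnf-automatic relations is closed under composition (the lemma above) and under union, it then suffices to handle each special case.

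\textbf{The one-step relation.}
First I would treat $\step_{p,q}$. By definition $(s,t)\in\step_{p,q}$ iff there is a transition $p\xrightarrow{a\mid w}q$ and a trace $x$ with $s=x\cdot[a]$ and $t=x\cdot[w]$. For a fixed transition this is the relation $\Id_\bM\cdot\bigl(\{[a]\}\times\{[w]\}\bigr)$ read correctly as ``append $[a]$ to the first component and $[w]$ to the second'', which is an instance of Lemma~\ref{L-fnf-sr} with the recognizable singletons $\cK=\{[a]\}$ and $\cL=\{[w]\}$. Taking the union over the finitely many transitions from $p$ to $q$ gives $\step_{p,q}$, so it is fnf-automatic. (One must be careful that the pushdown is accessed at the suffix; but the product lemma appends on the right, which is exactly what is needed.)

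\textbf{The two special cases.}
For the deleting case, every computation step only removes a suffix letter, so the reachable pairs $(s,t)$ are those where $t$ is obtained from $s$ by stripping a suffix whose letters are read along a valid $\varepsilon$-writing path in the control; here property (P2) and saturation let one commute and absorb independent reads, and the resulting relation should again be expressible via Lemma~\ref{L-fnf-sr} with a recognizable ``deletable suffix'' language. For the non-deleting case, the crucial new ingredient (flagged in the introduction) is that \emph{loop-connectedness upgrades the set of writable traces from rational to recognizable}: the loop-connected condition forces every cycle in the control to contribute a connected alphabet, and connected trace languages of this shape are recognizable rather than merely rational. Once the set of traces that can be pushed is recognizable, the non-deleting reachability relation is $\Id_\bM$ composed with a product of the form $\cK\times\{1\}$, hence fnf-automatic by Lemma~\ref{L-fnf-sr} again.

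\textbf{Recombination and the main obstacle.}
Finally I would argue that an arbitrary computation factors into an alternation of non-deleting and deleting phases, and that by the tPDS axioms (P1), (P2) together with saturation this alternation can be bounded so that $\reach_{p,q}$ is a \emph{finite} union of compositions of the two special relations over intermediate states; closure under composition and union then finishes the proof. The main obstacle I expect is precisely the recognizability (not mere rationality) of the writable-trace language in the non-deleting case: proving this is where loop-connectedness must be used in an essential way, since the counterexample $\{[A^n]\mid n\in\bN\}$ in Corollary~\ref{C-recognizable->fnf-automatic} shows that a rational-but-non-recognizable language would break Lemma~\ref{L-fnf-sr}. Controlling the interface between consecutive phases (ensuring the suffix written in a non-deleting phase is exactly the one consumed in the next deleting phase, up to commutation of independent letters) is the second delicate point, and it is here that the $D(w)\subseteq D(a)$ restriction (P1) should keep the bookkeeping of Foata normal forms finite-state.
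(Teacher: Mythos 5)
Your proposal has the same skeleton as the paper's proof (one-step relation via Lemma~\ref{L-fnf-sr}; two special cases; loop-connectedness upgrading the written language from rational to recognizable via the connected-loops criterion; recombination by union and composition), but it contains a genuine gap in the non-deleting special case. You drop the hypothesis, present both in the paper and in the introduction's sketch (``all transitions read similar letters''), that \emph{all letters read by the non-deleting transitions are twins}: the paper's Proposition~\ref{prop:write} treats only systems with $\Delta\subseteq Q\times\twins(b)\times\Sigma^+\times Q$, not all non-deleting transitions at once. This hypothesis is what makes the key characterization work: if every readable letter lies in $T=\twins(b)$, then by (P1) and reflexivity of $D$ every letter ever written is dependent on every letter of $T$, so a run starting in $x\cdot[a]$ can only ever read $a$ and letters it has itself written; this yields $\{y\mid(x\cdot[a],y)\in\reach_{p,q}\}=\{x\}\cdot\cH_a$ and hence the form $\Id_\bM\cup\bigcup_{a\in T}\Id_\bM\cdot(\{[a]\}\times\cH_a)$. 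Without the twins restriction this shape is simply false. Take $a\parallel c$ with $D(a)=\{a\}$, $D(c)=\{c\}$ and the loop-connected tPDS with transitions $p\xrightarrow{a\mid aa}p$ and $p\xrightarrow{c\mid cc}p$: after consuming $a$ from $x\cdot[c]\cdot[a]$, the letter $c$ becomes maximal (everything written for $a$ is independent of $c$ by (P1)) and can be consumed as well, so $(x\cdot[ca],\,x\cdot[c^na^m])\in\reach_{p,p}$ for all $n,m\ge1$; one checks that no finite union of relations $\Id_\bM\cdot(\{[e]\}\times\cH_e)$ with single letters $e$ captures the pair $([ca],[c^2a^2])$ without also containing unreachable pairs. (Incidentally, your formula ``$\Id_\bM$ composed with a product of the form $\cK\times\{1\}$'' is the shape of the \emph{deleting} case; in the non-deleting case the recognizable language sits in the second component.)

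The repair is exactly the paper's finer decomposition: split $\Delta$ into $\Delta_\varepsilon$ and one set $\Delta_T$ per twin class $T$, observe that each subsystem inherits loop-connectedness (its loops are loops of $\cP$), apply the two special-case results, and then invoke the theorem that $\reach_{p,q}$ is obtained from the relations $\reach^\varepsilon_{r,s}$ and $\reach^T_{r,s}$ by union and composition; in the example above, the pairs $(x[ca],x[c^na^m])$ arise precisely as a composition of the $\{a\}$-system relation with the $\{c\}$-system relation. Be aware that this recombination step, which you justify only by an informal ``bounded alternation'' remark, is itself a substantial theorem: the paper does not reprove it but imports it from the earlier work on tPDS, and a self-contained argument would have to show, e.g., that reads from distinct twin classes concern mutually independent parts of the stack and can therefore be serialized into finitely many phases.
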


Note that $(s,t)\in\step_{p,q}$ iff there is a trace $x$ and a
transition $(p,a,w,q)\in\Delta$ with $s=x\cdot[a]$ and $t=x\cdot[w]$. Hence
\[
  \step_{p,q}=\bigcup_{(p,a,w,q)\in\Delta}\Id_\bM\cdot
  \biggl(\bigl\{[a]\bigr\}\times\bigl\{[w]\bigr\}\biggr)\,.
\]
Since this is a finite union and since unary languages are
recognizable, the fnf-automaticity of $\step_{p,q}$ follows easily
from Lemma~\ref{L-fnf-sr}. 

The following proof of the claim regarding $\reach_{p,q}$ is more
involved.  It uses constructions and results from \cite{KoeK26} that
were shown for all tPDS and it then refines them for the case of
loop-connected systems. It starts by first demonstrating the theorem
for two special cases.

The first special case assumes that all transitions of $\cP$ shorten
the pushdown, i.e., write the empty word. Since any such system is
trivially loop-connected, we do not require this explicitly.

\begin{proposition}\label{prop:read}
  Let $\cP=(Q,\Sigma,D,\Delta)$ be a tPDS with
  $\Delta\subseteq Q\times \Sigma\times\{\varepsilon\}\times Q$, and
  $p,q\in Q$ two states. Then the relation $\reach_{p,q}$ is
  effectively fnf-automatic.
\end{proposition}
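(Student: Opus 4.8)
Since every transition writes $\varepsilon$, a single step can only remove the topmost letter: $(p',s')\vdash(q',t')$ holds iff there is a transition $(p',a,\varepsilon,q')\in\Delta$ with $s'=t'\cdot[a]$. Reading a path $p=p_0\xrightarrow{a_1}p_1\xrightarrow{a_2}\cdots\xrightarrow{a_m}p_m=q$ off the stack therefore yields $s=t\cdot[a_m\cdots a_1]$. So the plan is to regard $\Delta$ as an NFA $\cA_{p,q}$ over $\Sigma$ (with initial state $p$, final state $q$, and an $a$-labelled edge $p'\to q'$ for each transition $(p',a,\varepsilon,q')\in\Delta$), to let $L=L(\cA_{p,q})$ be its regular language of read words, and to observe that
\[
  \reach_{p,q}=\{(t\cdot r,t)\mid t\in\bM,\ r\in R\}=\Id_\bM\cdot(R\times\{1\})\,,
\]
where $R=\{[\mathrm{rev}(w)]\mid w\in L\}$ collects the traces read off the stack (here $\mathrm{rev}$ reverses a word, accounting for the fact that the last letter read sits innermost).

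The crux is to prove that $R$ is \emph{recognizable} rather than merely rational; then the proposition follows from Lemma~\ref{L-fnf-sr} applied with the fnf-automatic relation $\Id_\bM$, the recognizable language $\cK=R$, and $\cL=\{1\}$. First I would show that $L$ is closed under the trace congruence $\sim$. Since $\sim$ is generated by swaps of adjacent independent letters, it suffices to treat one swap: if $uabv\in L$ with $a\parallel b$, then along an accepting run the two letters $a,b$ are consumed by transitions $p'\xrightarrow{a\mid\varepsilon}q'\xrightarrow{b\mid\varepsilon}r'$, and condition~(P2) supplies a state $q''$ with $p'\xrightarrow{b\mid\varepsilon}q''\xrightarrow{a\mid\varepsilon}r'$, so $ubav\in L$ as well. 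Iterating over a derivation witnessing $w\sim w'$ gives $\sim$-closedness of $L$. As the independence relation is symmetric, reversal preserves $\sim$-closedness, and it certainly preserves regularity, so $\mathrm{rev}(L)$ is a regular, $\sim$-closed language; consequently $\{u\mid[u]\in R\}=\mathrm{rev}(L)$ is regular, i.e.\ $R$ is recognizable.

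With $R$ recognizable, Lemma~\ref{L-fnf-sr} yields that $\reach_{p,q}=\Id_\bM\cdot(R\times\{1\})$ is fnf-automatic, and every step (building $\cA_{p,q}$, the closure arguments, reversal, and the automaton from Lemma~\ref{L-fnf-sr}) is effective. The one genuinely non-routine point is the recognizability of $R$: rationality of $R$ is immediate and would hold for an arbitrary deleting pushdown system, but Lemma~\ref{L-fnf-sr} requires recognizability, and it is exactly the tPDS commutation property~(P2) that upgrades the regular language $L$ to a $\sim$-closed one and thereby makes $R$ recognizable.
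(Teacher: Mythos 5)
Your proposal is correct and follows essentially the same route as the paper: the paper's own proof just invokes \cite{KoeK25} (Prop.~5.4 there) for the decomposition $\reach_{p,q}=\Id_\bM\cdot(\cK\times\{1\})$ with $\cK$ effectively recognizable, and then applies Lemma~\ref{L-fnf-sr}, which is exactly the structure of your argument. The only difference is that you reconstruct the cited fact from scratch --- reading $\Delta$ as an NFA, using (P2) to show the language of read words is closed under $\sim$, and concluding that $\cK$ is recognizable rather than merely rational --- and this reconstruction is sound.
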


\begin{proof}
  The proof of \cite[Prop.~5.4]{KoeK26} shows
  $\reach_{p,q}=\Id_\bM\cdot(\cK\times\{1\})$ for some effectively
  recognizable language $\cK\subseteq\bM$. Hence the claim follows
  from Lemma~\ref{L-fnf-sr}.
\end{proof}

While above we consider systems that never write anything onto the
pushdown, we next want to study systems that always write
something. In other words, these systems do not shorten the pushdown
in any step. We make the additional assumption that all letters read
from the pushdown are twins, i.e., if
$(p,a,w,q),(p',a',w',q')\in\Delta$ are transitions, then
$D(a)=D(a')$. In \cite[Proof of Prop.~4.2]{KoeK26}, it is shown that
the reachability relation $\reach_{p,q}$ in such systems equals
\[
  \bigcup_{a\in\Sigma}\Id_\bM\cdot(\{[a]\}\times\cH_a)
  \underbrace{\cup\, \Id_\bM}_{\text{if }p=q}
\]
for some effectively rational sets of traces $\cH_a$. The additional
assumption of loop-connectedness allows to show that these languages
$\cH_a$ are even recognizable.

\begin{lemma}\label{lem:write1}
  Let $\cP=(Q,\Sigma,D,\Delta)$ be a loop-connected tPDS with
  $\Delta\subseteq Q\times \twins(a)\times \Sigma^+\times Q$ for some
  $a\in \Sigma$, and $p,q\in Q$ be two states. There exists an
  effectively recognizable trace language $\cH_a\subseteq \bM$ such
  that, for any $x\in\bM$,
  \begin{equation}
    \label{eq:write1}
    \{y\in\bM\mid(x\cdot[a],y)\in\reach_{p,q}\}=\{x\}\cdot\cH_a\,.
  \end{equation}
\end{lemma}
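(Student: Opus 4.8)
The plan is to prove the two halves of the statement separately: first the ``context-independence'' encoded in \eqref{eq:write1} with the \emph{canonical} choice
\[
  \cH_a=\{h\in\bM\mid (p,[a])\vdash^*(q,h)\}\,,
\]
and then the recognizability of this particular set, where loop-connectedness enters. The engine for both halves is one structural observation: by (P1), every transition $r\xrightarrow{b\mid w}r'$ with $b\in\twins(a)$ satisfies $\alphabet(w)\subseteq D(w)\subseteq D(b)=D(a)$, so \emph{every letter ever pushed is dependent on every twin of $a$} (if $c\in\twins(a)$ and $d\in D(a)$ then $d\in D(c)$). In particular the twins of $a$ are pairwise dependent, hence appear linearly ordered in every reachable trace.

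For \eqref{eq:write1} I would first note that $\vdash$ is a left-congruence: if $(r,s)\vdash(r',t)$ via $s=[ya]$, $t=[yw]$, then $xs=[xya]$ and $xt=[xyw]$, so $(r,xs)\vdash(r',xt)$ for every $x$. Iterating gives $\{x\}\cdot\cH_a\subseteq\{y\mid(x[a],y)\in\reach_{p,q}\}$. For the reverse inclusion I would prove, by induction along the computation, the invariant that every configuration reachable from $(p,x[a])$ has the form $(r,x\cdot s)$ with $s\neq 1$ and $\alphabet(s)\subseteq D(a)$, and that \emph{no letter of the $x$-part is ever read}. Indeed, a readable letter must be a maximal twin of $a$; but any twin of $a$ occurring in $x$ is dependent on all of the nonempty $s$ sitting above it (every letter of $s$ lies in $D(a)$), hence is dominated and not maximal. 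Therefore each step modifies only the $s$-part, and since maximality of a letter of $s$ is the same in $s$ and in $x\cdot s$, projecting away $x$ turns the computation into one from $(p,[a])$; this yields $s\in\cH_a$ and closes \eqref{eq:write1}.

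The substantial part is the recognizability of $\cH_a$, and here the linear ``spine'' structure pays off. Since every pushed letter is dependent on every twin, a readable twin must be maximal, i.e.\ the topmost twin with an empty gap above it; reading it and pushing $w$ buries everything below forever, and the computation can only continue if $w$ itself exposes a new maximal top twin. Writing such a $w$ as $w=u\,b'$ with $b'$ its (maximal) top twin, the pushdown evolves as $[u_1]\,[u_1u_2]\cdots$, so the pushdown content after a sequence of continuing steps is exactly $[u_1u_2\cdots u_m\,b_m]$. This lets me build a \emph{finite} graph $G$ on vertices $\twins(a)\times Q$, with ``continuation'' edges $(b,r)\to(b',r')$ for each transition $r\xrightarrow{b\mid u\,b'}r'$ (with $b'$ maximal in $[u\,b']$) labelled by the buried content $u$, together with exit edges recording the final piece (a blocking or twin-free last write, or simply stopping at $q$). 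Reading labels along accepting paths gives a regular word language $L$ with $\cH_a=[L]$.

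The final step is Ochmański's theorem (recognizable $=$ star-connected rational). Every cyclic walk of $G$ from a vertex $(b,r)$ back to itself is precisely a loop $p_0\xrightarrow{a_0\mid u_1a_1}\cdots\xrightarrow{a_{n-1}\mid u_na_n}p_n$ with $(p_0,a_0)=(p_n,a_n)$ of the shape occurring in Definition~\ref{Def-loop-connected} (the $v_i$ are empty because $b'$ is chosen maximal), so loop-connectedness guarantees that $\alphabet(u_1u_2\cdots u_n)$ is connected. Hence every word produced by any starred subexpression in a state-elimination description of $L$ has connected trace-alphabet, i.e.\ $L$ is star-connected, and $\cH_a=[L]$ is recognizable. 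The main obstacle I expect is making the spine/linearity argument fully rigorous—showing that the only readable letter is always the current top twin, that buried twins are never revisited, and that $\cH_a$ is captured exactly by $G$ (with the correct treatment of exit edges)—and then matching the cycles of $G$ precisely with the loops of Definition~\ref{Def-loop-connected} so that loop-connectedness delivers the star-connectedness needed for Ochmański's theorem; alternatively one may import the rationality of $\cH_a$ from \cite{KoeK25} and only carry out this last upgrade to recognizability.
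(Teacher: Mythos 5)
Your proposal is correct in substance and, on the decisive point, takes the same route as the paper: both arguments rest on a finite automaton whose edges are labelled by the ``buried'' parts of the written words (the paper imports from \cite[Lemma~5.5]{KoeK25} an $\varepsilon$-NFA with states $(r,c)\in Q\times\Sigma$, edges labelled $uv$ for transitions $(r,c,udv,s)\in\Delta$ with $v\parallel d$, and exit transitions $(q,c)\xrightarrow{c}(q,\varepsilon)$; this is your graph $G$ up to bookkeeping of exits), on matching the cycles of this automaton with the loops of Definition~\ref{Def-loop-connected}, and on an Ochma\'nski-type theorem turning ``every cycle has connected alphabet'' into recognizability of the accepted trace language. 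The genuine difference is one of economy: the paper imports both the automaton and the identity \eqref{eq:write1} wholesale from \cite{KoeK25} and only adds the connectedness-of-cycles verification, whereas you re-derive the imported material from scratch, proving \eqref{eq:write1} via left-congruence of $\vdash$ plus the ``spine'' invariant (every stack reachable from $(p,x\cdot[a])$ is $x\cdot s$ with $s\neq 1$, $\alphabet(s)\subseteq D(a)$, and only the unique maximal twin occurrence of $s$ readable), and proving $\cH_a=[L]$ by the same invariant. That invariant is sound, it follows from (P1), reflexivity of $D$, and pairwise dependence of twins exactly as you indicate, so your version buys self-containedness at the cost of redoing \cite{KoeK25}; your closing alternative (import rationality of $\cH_a$ and only perform the upgrade to recognizability) is precisely what the paper does.

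One justification you give is wrong, though the gap is repairable by an argument the paper makes explicit. You claim the cycles of $G$ match Definition~\ref{Def-loop-connected} ``with the $v_i$ empty because $b'$ is chosen maximal''. Maximality of the occurrence of $b'$ in $[w]$ only yields a word-level factorization $w=u'b'v'$ with $v'\parallel b'$; it does not make the suffix $v'$ empty, and your edge label is $u\sim u'v'$, whereas Definition~\ref{Def-loop-connected} asserts connectedness of $\alphabet(u'_1u'_2\cdots u'_n)$ only, i.e.\ without the $v'_i$. The missing step is the observation following Definition~\ref{Def-loop-connected} (used verbatim in the paper's proof): around any loop of this shape one has $D(a_i)=D(a_0)$ for all $i$ by (P1), so every letter of $v'_i$ lies in $D(a_{i-1})=D(a_i)$, contradicting $a_i\parallel v'_i$ unless $v'_i=\varepsilon$. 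With this purely tPDS-based argument inserted, your cycle labels coincide with the $u'_i$, loop-connectedness delivers the connected alphabets, and the remaining steps (star-connectedness via state elimination, then Ochma\'nski's theorem \cite{Och85}) go through.
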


\begin{proof}
  Suppose $\cP$ is a tPDS that is not necessarily
  loop-connected. Then \cite[Lemma~5.5]{KoeK26} ensures the effective
  existence of a \emph{rational} set $\cH_a$ with
  \eqref{eq:write1}. Here, we show that this set is even
  \emph{recognizable} provided $\cP$ is loop-connected.

  So recall the construction of an $\varepsilon$-NFA
  $\cA=(Q_\cA,A,I,\delta,F)$ from the proof of
  \cite[Lemma~5.5]{KoeK26} (the set $\cH_a$ equals $[L(\cA)]$).

  We start with the only accepting state $(q,\varepsilon)$ and all
  pairs $(r,c)\in Q\times \Sigma$ as further states (recall that $Q$
  is the set of states of the tPDS $\cP$). To start, we add
  $c$-labeled transitions from $(q,c)$ to $(q,\varepsilon)$ for any
  letter $c\in \Sigma$. Then, for any transition
  $(r,c,udv,s)\in\Delta$ with $d\in \Sigma$ and $u,v\in \Sigma^*$ such
  that $d\parallel v$, we add (introducing auxiliary states) a
  $uv$-labeled path from $(r,c)$ to $(s,d)$. The set of initial states
  is $I=\{(p,a)\}$ and the set of final states is
  $F=\{(q,\varepsilon)\}$.

  We set $H_a=L(\cA)$ and $\cH_a=[H_a]$. In the proof of
  \cite[Lemma~5.5]{KoeK26}, it is then shown that
  Eq.~\eqref{eq:write1} holds for all $x\in\bM$. It therefore remains
  to be shown that $\cH_a$ is recognizable (and only this proof
  requires the tPDS $\cP$ to be loop-connected).

  To this aim, we first verify that the $\varepsilon$-NFA $\cA$
  satisfies the following: if $q$ is a state of $\cA$ and
  $w\in\Sigma^*$ such that $q\xrightarrow{w}_\cA q$, then
  $\alphabet(w)$ is connected.

  So suppose $q\xrightarrow{w}_\cA q$. If $w=\varepsilon$, then
  $\alphabet(w)=\emptyset$ is connected. From now on, we 
  assume $w\neq\varepsilon$. The construction of $\cA$ implies that
  the loop $q\xrightarrow{w}_\cA q$ does not consist of auxiliary
  states, only. Hence we can assume $q=(p_0,c_0)\in
  Q\times\Sigma$. Then there are $n\ge1$ and, for all $i\in[n]$,
  $p_i\in Q$, $c_i\in \Sigma$, and $v_i\in\Sigma^*$ with
  \[
    (p_0,c_0)\xrightarrow{v_1}_\cA(p_1,c_1)\xrightarrow{v_2}_\cA
    \cdots
    \xrightarrow{v_n}_\cA(p_n,c_n)=(p_0,c_0)
  \]
  such that only auxiliary states are visited between
  $(p_{i-1},c_{i-1})$ and $(p_i,c_i)$ for any $i\in[n]$. Furthermore,
  $w=v_1\,v_2\,\cdots\,v_n$.

  For all $i\in[n]$, the path
  $(p_{i-1},c_{i-1})\xrightarrow{v_i}_\cA(p_i,c_i)$ visits auxiliary
  states, only. Hence there are $x_i,y_i\in\Sigma^*$ such that
  \[
    (p_{i-1},c_{i-1},x_ic_iy_i,p_i)\in\Delta\,,
    c_i\parallel y_i\,,\text{ and }
    v_i=x_iy_i\,,
  \]
  i.e., these transitions form a loop as in the definition of
  loop-connectedness (cf.~Def.~\ref{Def-loop-connected}). Following
  that definition, we showed $y_i=\varepsilon$ for all $i\in[n]$ and
  therefore
  \[
    \alphabet(x_1x_2\cdots x_n)=\alphabet(v_1v_2\cdots v_n)=\alphabet(w)\,.
  \]
  Hence the loop-connectedness implies that $\alphabet(w)$ is connected.
 
  Thus, indeed, for any state $q$ of the $\varepsilon$-NFA $\cA$ and
  any word $w\in\Sigma^*$ with $q\xrightarrow{w}_\cA q$, the set
  $\alphabet(w)$ is connected. For automata with this property,
  \cite[Lemma~2.1]{Kus07} as well as \cite[Prop~5]{MusP99} explicitly
  say that the set of traces
  \[
    \cH_a=\{[w]\mid w\in L(\cA)\}=\{[w]\mid w\in H_a\}
  \]
  is effectively recognizable.
\end{proof}

\begin{proposition}\label{prop:write}
  Let $\cP=(Q,\Sigma,D,\Delta)$ be a loop-connected tPDS with
  $\Delta\subseteq Q\times \twins(b)\times \Sigma^+\times Q$ for some
  $b\in \Sigma$, and $p,q\in Q$ two states. Then the relation
  $\reach_{p,q}$ is effectively fnf-automatic.
\end{proposition}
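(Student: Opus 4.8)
The plan is to reduce the claim to the recognizability result just established in Lemma~\ref{lem:write1}, combined with the product closure of Lemma~\ref{L-fnf-sr}. The starting point is the observation that, since every transition reads a letter from $\twins(b)$ and writes a nonempty word, the first step of any computation $(p,s)\vdash^*(q,t)$ that uses at least one transition must read some letter $a\in\twins(b)$; in particular $s$ admits a factorisation $s=x\cdot[a]$. For each such $a$ we have $\twins(a)=\twins(b)$, so the inclusion $\Delta\subseteq Q\times\twins(b)\times\Sigma^+\times Q$ is an instance of the hypothesis of Lemma~\ref{lem:write1} with distinguished letter $a$; that lemma then yields an effectively recognizable trace language $\cH_a\subseteq\bM$ with $\{y\mid(x\cdot[a],y)\in\reach_{p,q}\}=\{x\}\cdot\cH_a$ for all $x\in\bM$.

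First I would derive a decomposition of the reachability relation. The set of pairs $(s,t)\in\reach_{p,q}$ realised by at least one step is exactly $\bigcup_{a\in\twins(b)}\{(x\cdot[a],x\cdot y)\mid x\in\bM,\ y\in\cH_a\}$, which by definition of the componentwise product equals $\bigcup_{a\in\twins(b)}\Id_\bM\cdot(\{[a]\}\times\cH_a)$. The remaining pairs arise from the zero-step computations: these contribute $\Id_\bM$ when $p=q$ and nothing when $p\neq q$. Hence
\[
  \reach_{p,q}=R_0\cup\bigcup_{a\in\twins(b)}\Id_\bM\cdot(\{[a]\}\times\cH_a)\,,
\]
where $R_0=\Id_\bM$ if $p=q$ and $R_0=\emptyset$ otherwise. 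The term $R_0$ is genuinely needed when $p=q$, since traces not ending in a letter of $\twins(b)$ are not covered by the union; any overlaps between the summands are harmless.

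Then I would assemble fnf-automaticity from the available closure properties. Each singleton $\{[a]\}$ is recognizable, each $\cH_a$ is recognizable by Lemma~\ref{lem:write1}, and $\Id_\bM$ is fnf-automatic; so Lemma~\ref{L-fnf-sr} shows that every summand $\Id_\bM\cdot(\{[a]\}\times\cH_a)$ is effectively fnf-automatic. Since $R_0$ is fnf-automatic (being either $\Id_\bM$ or $\emptyset$) and the class of fnf-automatic relations is effectively closed under finite union, the relation $\reach_{p,q}$ is effectively fnf-automatic.

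The conceptual difficulty, namely upgrading the languages $\cH_a$ from merely rational to recognizable, has already been discharged in Lemma~\ref{lem:write1}, where loop-connectedness enters. Consequently the only genuine work remaining is to justify the decomposition carefully: one must verify that the first read letter always lies in $\twins(b)$, that Lemma~\ref{lem:write1} is invoked with the correct distinguished letter in each branch of the union, and that the zero-step pairs are faithfully captured by $R_0$. I expect this bookkeeping to be the main, though modest, obstacle.
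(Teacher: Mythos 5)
Your proof is correct and follows essentially the same route as the paper: the paper likewise derives the decomposition $\reach_{p,q}=\Id_\bM\cup\bigcup_{a\in\twins(b)}\Id_\bM\cdot(\{[a]\}\times\cH_a)$ from Lemma~\ref{lem:write1} and then concludes by Lemma~\ref{L-fnf-sr} together with the effective closure of fnf-automatic relations under union. Your handling of the zero-step pairs via $R_0$ (distinguishing $p=q$ from $p\neq q$) is in fact slightly more careful than the paper's displayed formula, which includes $\Id_\bM$ unconditionally.
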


\begin{proof}
  From  Lemma~\ref{lem:write1}, we get
  \[
    \reach_{p,q}=\bigcup_{a\in\twins(b)}\Id_\bM\cdot(\{[a]\}\times\cH_a)
    \underbrace{\cup\, \Id_\bM}_{\text{if }p=q}
  \]
  for some effectively recognizable languages
  $\cH_a\subseteq\bM$. Recall that $\Id_\bM$ is fnf-automatic and that
  the class of fnf-automatic relations is effectively closed under
  union. Hence Lemma~\ref{L-fnf-sr} implies that $\reach_{p,q}$
  is effectively fnf-automatic.
\end{proof}

We now come to the proof of the central result of this section.\bigskip

\noindent\textbf{Proof of Theorem~\ref{T-reachability-fnf-automatic}:}\\
  Above, we already showed that $\step_{p,q}$ is fnf-automatic. So it
  remains to consider the relation $\reach_{p,q}$. We first split the
  set of transitions of the loop-connected system $\cP=(Q,\Delta)$
  into finitely many sets:
  \begin{align*}
  \Delta_\varepsilon &= \Delta\cap(Q\times \Sigma\times\{\varepsilon\}\times Q)
    & \cP_\varepsilon &= (Q,\Delta_\varepsilon)\\
  \Delta_T &= \Delta \cap (Q\times T\times \Sigma^+\times Q)
    & \cP_T &= (Q,\Delta_T)
  \end{align*}
  where $T=\twins(a)$ for some $a\in\Sigma$, i.e.,
  $T=\{b\in\Sigma\mid D(a)=D(b)\}$. For $r,s\in Q$, let
  $\reach_{r,s}^\varepsilon$ denote the reachability relation of the
  system $\cP_\varepsilon$ and $\reach_{r,s}^T$ that of the system
  $\cP_T$.

  It is easily seen that $\cP_\varepsilon$ and $\cP_T$ are tPDS.  Note
  that all loops in $\cP_\varepsilon$ or $\cP_T$ are also loops in the
  system $\cP$. Hence $\cP_\varepsilon$ and $\cP_T$ are loop-connected tPDS
  such that Propositions~\ref{prop:read} and \ref{prop:write},
  respectively, are applicable. Hence the relations
  $\reach_{r,s}^\varepsilon$ and $\reach_{r,s}^T$ are effectively
  fnf-automatic.

  In the proof of \cite[Theorem~5.12]{KoeK26}, it is shown that the
  reachability relation $\reach_{p,q}$ of the tPDS $\cP$ can
  effectively be obtained from the relations
  $\reach_{r,s}^\varepsilon$ and $\reach_{r,s}^T$ by composition and
  union. Since the class of fnf-automatic structures is effectively
  closed under these operations, the theorem's claim follows.
\endproof

\section{Proof of Theorem~\ref{T-semi-main}}
\label{S-proof}

We have to demonstrate that the first-order theory of the structure
$\cS(\cP)$ is uniformly decidable for all loop-connected tPDS. To this
aim, we prove that the structure $\cS(\cP)$ is effectively automatic
\cite{Hod82,KhoN95,BluG00} for any loop-connected tPDS
$\cP=(Q,\Sigma,D,\Delta)$.

So let $\eFNF$ denote the set of all words in extended Foata normal
form. Then $\eFNF$ is regular. By
Theorem~\ref{T-reachability-fnf-automatic}, also the languages
$L_{\step_{p,q}}$ and $L_{\reach_{p,q}}$ for $p,q\in Q$ are
effectively regular. Recall that the mapping $\eFNF\to\bM$ that maps a
word $W\in\eFNF$ to the trace $[W]$ is surjective. Hence the structure
$\cS(\cP)$ is effectively automatic.

  Now the claim of the theorem follows since the first-order theory of
  automatic structures is uniformly decidable~\cite[Cor.~4.2]{KhoN95}
  (alternatively, see \cite{Hod82} or \cite[Thm.~2.1]{BluG04}). 
  \endproof
  
\begin{remark}
  Let $\cR\subseteq\bM^k$ be a $k$-ary fnf-automatic relation and
  $(p_1,\dots,p_k)$ a tuple of states of the loop-connected tPDS
  $\cP$. Extend the configuration graph $\ConfGraph(\cP)$ with the relation
  \[
    \{(p_i,t_i)_{1\le i\le k}\mid (t_1,\dots,t_k)\in\cR\}
  \]
  and the structure $\cS(\cP)$ with the relation $\cR$. Then the above
  proof still gives the decidability of the first-order theory. Adding
  such relations seems pretty exotic, but it might be of interest in
  the following two special cases:
  \begin{itemize}
  \item Let $\cL\subseteq\bM$ be an fnf-automatic (e.g., recognizable)
    set of traces understood as a property of the content of the
    pushdown.  Adding $Q\times\cL$ as a unary relation to
    $\ConfGraph(\cP)$ allows to use this property of pushdown contents
    in formulas. Then formulas can make statements of the form
    ``There is a path from here to there that visits some state from
    $Q\times\cL$.''
  \item Let $\Delta'\subseteq\Delta$ be some set of transitions such
    that also the restricted tPDS $\cP'=(Q,\Sigma,D,\Delta')$ is
    loop-connected. Since the reachability relation of this restricted
    tPDS is fnf-automatic, one can add it as a binary relation to the
    configuration graph. Then formulas can express properties of the
    form ``There is a path from here to there that does not use
    forbidden transitions from $\Delta\setminus\Delta'$.''
  \end{itemize}
  Recall that in case $D=\Sigma^2$, even the monadic second-order
  theory of $\ConfGraph(\cP)$ is decidable. By Example~\ref{E-MSO},
  this does not hold for arbitrary loop-connected tPDS. But another
  restriction of second-order logic yields a decidable theory of
  reachability. In this restriction, second-order quantification
  $\exists X\text{ infinite}\colon\varphi$ is possible for relational
  variables $X$ of arbitrary arity $r$, but with the condition that
  the atomic formula $(x_1,\dots,x_r)\in X$ appears only negatively in
  $\varphi$. In \cite{KusL10}, it is shown that the validity of such
  formulas is decidable for any automatic structure. Thus, this
  decidability also holds for the configuration graph with
  reachability of a loop-connected tPDS.

  The work on automatic structures revealed that additional types of
  quantifiers are possible: the infinity quantifier $\exists^\infty$
  \cite{BluG00,BluG04}, modulo-counting quantifiers \cite{Rub04},
  Ramsey-quantifier \cite{Rub08}, and a boundedness quantifier
  \cite{Kus09a}.
\end{remark}

%\bibliographystyle{subset-nocomment}
%\bibliography{literatur}
\bibliographystyle{fundam}
\bibliography{references}

%%%%%%%%%%%%%%%%%%%%%%%%%%%%%%%%%%%%%%%%%%%%%%%%%%%%%%%%%%%%%%%%%%%%%%

\end{document}